\documentclass[a4paper]{article}

\usepackage{amsmath}
\usepackage{amssymb}
\usepackage{graphicx}
\usepackage[OT4]{fontenc}
\usepackage[latin2]{inputenc}
\usepackage{tikz}
\usepackage{url}
\usepackage[numbers]{natbib}
\usepackage{bm}
\emergencystretch=20pt

\usepackage{setspace}

\newcommand{\abs}[1]{\left| #1 \right|}

\newcommand{\okra}[1]{\left( #1 \right)}
\newcommand{\kwad}[1]{\left[ #1 \right]}
\newcommand{\klam}[1]{\left\{ #1 \right\}}

\newcommand{\floor}[1]{\left\lfloor #1 \right\rfloor}

\DeclareMathOperator{\esssup}{ess\, sup}
\DeclareMathOperator{\card}{card}

\DeclareMathOperator{\sred}{\mathbf{E}}

\newcommand{\boole}[1]{{\bf 1}{\klam{#1}}}

\newtheorem{theorem}{Theorem}
\newtheorem{lemma}{Lemma}
\newenvironment*{proof}{\begin{trivlist}\item[]
\noindent\textbf{Proof:}}{$\Box$\par\end{trivlist}}
\newenvironment*{proof*}[1]{\begin{trivlist}\item[]
\noindent\textbf{Proof of #1:}}{$\Box$\par\end{trivlist}}

\author{{\L}ukasz D\k{e}bowski\thanks{
    {\L}. D\k{e}bowski is with
    the Institute of Computer Science, Polish Academy of Sciences, 
    ul. Jana Kazimierza 5, 01-248 Warszawa, Poland 
    (e-mail: ldebowsk@ipipan.waw.pl).
  }
}

\title{Maximal Repetition and Zero Entropy Rate} \date{}

\begin{document}

\pagestyle{empty}   
\begin{titlepage}
\maketitle

\begin{abstract}
  Maximal repetition of a string is the maximal length of a repeated
  substring.  This paper investigates maximal repetition of strings
  drawn from stochastic processes.  Strengthening previous results,
  two new bounds for the almost sure growth rate of maximal repetition
  are identified: an upper bound in terms of conditional R\'enyi
  entropy of order $\gamma>1$ given a sufficiently long past and a
  lower bound in terms of unconditional Shannon entropy ($\gamma=1$).
  Both the upper and the lower bound can be proved using an inequality
  for the distribution of recurrence time. We also supply an
  alternative proof of the lower bound which makes use of an
  inequality for the expectation of subword complexity.  In
  particular, it is shown that a power-law logarithmic growth of
  maximal repetition with respect to the string length, recently
  observed for texts in natural language, may hold only if the
  conditional R\'enyi entropy rate given a sufficiently long past
  equals zero. According to this observation, natural language cannot
  be faithfully modeled by a typical hidden Markov process, which is a
  class of basic language models used in computational linguistics.
%
  \\[1em]
  \textbf{Keywords}: maximal repetition, R\'enyi entropies, entropy
  rate, recurrence time, subword complexity, natural language
\end{abstract}


\end{titlepage}
\pagestyle{plain}   


\section{Motivation and main results}
\label{secIntroduction}

Maximal repetition $L(x_1^n)$ of a string
$x_1^n=(x_1,x_2,...,x_n)$ is the maximal length of a repeated
substring. Put formally,
\begin{align}
  L(x_1^n):=\max\klam{k: x_{i+1}^{i+k}=x_{j+1}^{j+k} \text{ for some }
    0\le i<j\le n-k}.
\end{align}
Maximal repetition has been studied by computer scientists
\cite{DeLuca99,KolpakovKucherov99a,KolpakovKucherov99,CrochemoreIlie08},
probabilists
\cite{ErdosRenyi70,ArratiaWaterman89,Shields92b,Shields97}, and
information theorists
\cite{KontoyiannisSuhov94,Debowski11b,Debowski15g}. Maximal repetition
$L(x_1^n)$ can be computed efficiently for relatively long strings, in
time $O(n)$ \cite{KolpakovKucherov99a}, which opens way to various
empirical statistical studies. Moreover, for an arbitrary stochastic
process $(X_i)_{i=-\infty}^\infty$, maximal repetition $L(X_1^n)$ is
an nondecreasing function of the string length $n$.  In this paper, we
will investigate the rate of growth of maximal repetition for some
stochastic processes. 

Our theoretical investigations are motivated by an application to
statistical modeling of natural language.  In a previous paper of ours
\cite{Debowski15f}, we have been interested in the growth rate of
maximal repetition for texts in natural language.  Investigating 35
texts in English, French, and German, we have found that
a power-law logarithmic growth of maximal repetition,
\begin{align}
  \label{Debowski}
  L(x_1^n)\approx A\okra{ \log n }^{\alpha}
  ,
\end{align}
holds approximately with $\alpha\approx 3$. This empirical result
seems rather significant. It should be immediately noted that for a
random permutation of text characters, we observe the plain
logarithmic growth of maximal repetition,
\begin{align}
  \label{Logarithmic}
  L(x_1^n)\approx A\log n
  .
\end{align}
For a quick reference, in Figure \ref{figMaximalRepetition}, we
present the data for a collection of plays by William Shakespeare,
downloaded from Project Gutenberg (\url{http://www.gutenberg.org/}). To
smooth the plot, we have computed maximal repetition for strings
$x_{c_n+1}^{c_n+n}$ rather than $x_1^n$, where offsets $c_n$ are
selected at random.

\begin{figure}[h]
  \centering
  \includegraphics[width=0.9\textwidth]{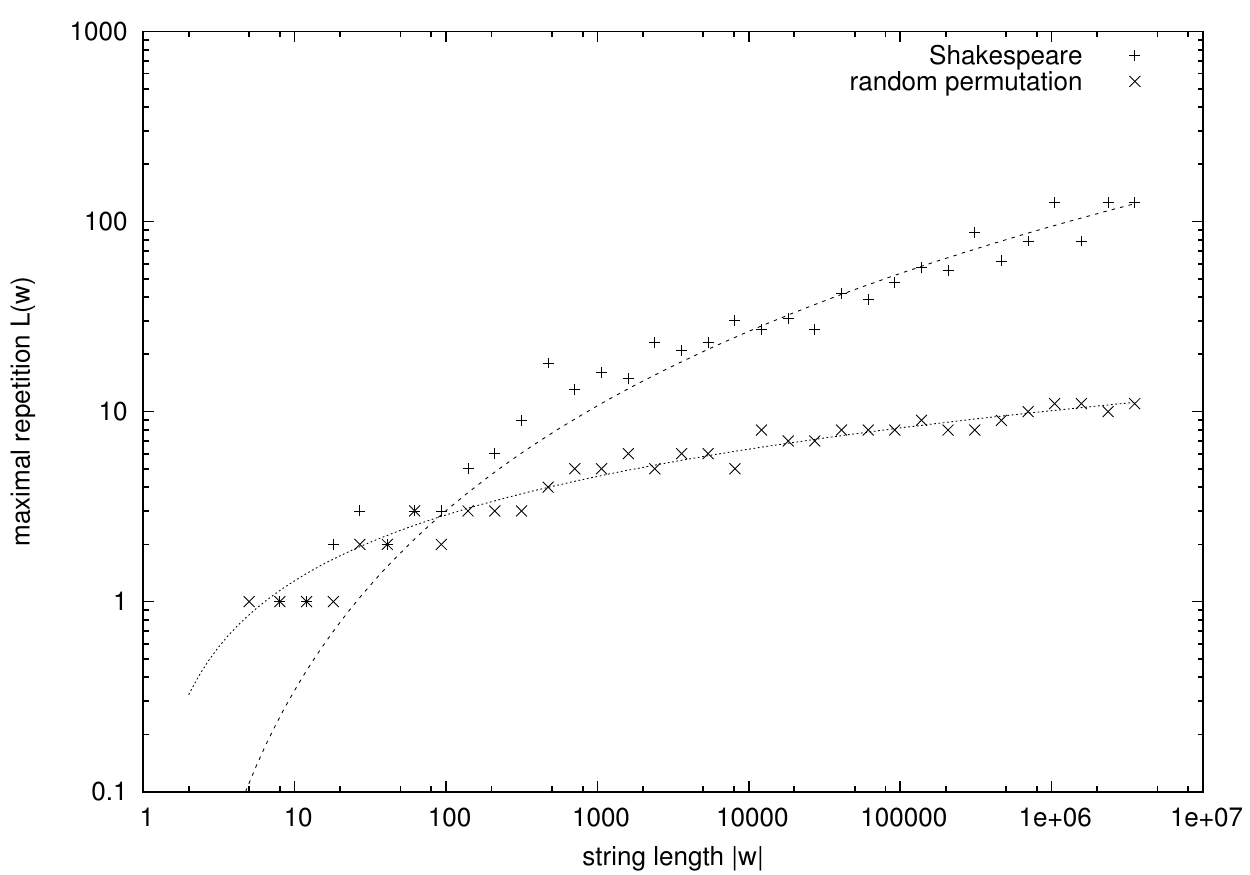}
  \caption{\label{figMaximalRepetition} Maximal repetition for the
    concatenation of 35 plays by William Shakespeare and a random
    permutation of the text characters. To smooth the results, we
    sampled substrings $w=x_{c_n+1}^{c_n+n}$ of length $\abs{w}=n$
    from both sources at random assuming a uniform probability
    distribution on $c_n=0,1,...,N-n$, where $N$ is the source
    length. For each length $\abs{w}=n$, only one substrings
    $w=x_{c_n+1}^{c_n+n}$ was sampled. The fitted model is
    $L(x_{c_n+1}^{c_n+n})\approx 0.02498 \okra{\log n}^{3.136}$ for
    Shakespeare and $L(x_{c_n+1}^{c_n+n})\approx 0.4936 \okra{\log
      n}^{1.150}$ for the random permutation.}
\end{figure}

Consequently, we may ask what the empirical law (\ref{Debowski}) can
tell us about the stochastic mechanism of natural language
generation. Let $(X_i)_{i=-\infty}^\infty$ be a stationary process.
We consider the Shannon entropy
\begin{align}
  H(n):=\sred\kwad{-\log P(X_1^n)}
\end{align}
and the associated Shannon entropy rate
\begin{align}
  h:=\lim_{n\rightarrow\infty} \frac{H(n)}{n}
  .
\end{align}
Maximal repetition $L(X_1^n)$ resembles another statistic that has
been intensely investigated, the longest match length $L_n$, which is
the maximal length $k$ such that string $X_0^{k-1}$ is a substring of
$X_{-n}^{-1}$
\cite{OrnsteinWeiss93,Szpankowski93,KontoyiannisOthers98,GaoKontoyiannisBienenstock08}. As
shown in \cite{OrnsteinWeiss93}, for a stationary ergodic process over
a finite alphabet, we have the pointwise convergence
\begin{align}
  \lim_{n\rightarrow\infty} \frac{\log n}{1+L_n}= h \text{ a.s.}
\end{align}
Since $L_n\le L(X_{-n}^{n-1})$ then, as discussed by Shields
\cite{Shields92b}, we obtain a logarithmic bound for the maximal
repetition,
\begin{align}
  \label{hmaxrep}
  \limsup_{n\rightarrow\infty} \frac{\log n}{1+L(X_{-n}^{n-1})}\le h
  \text{ a.s.}
\end{align}
Hence the growth rate of maximal repetition provides a lower bound for
the Shannon entropy rate.

Does then the power-law logarithmic growth of maximal repetition
(\ref{Debowski}) imply that the Shannon entropy rate of natural
language is zero? Here let us note that the overwhelming evidence
collected so far suggests that the Shannon entropy rate of natural
language is strictly positive, $h\approx 1$ bit per character
\cite{Shannon51,CoverKing78,BrownOthers83,Grassberger02,BehrOthers03,TakahiraOthers16}
but among researchers investigating this question there was an
exception. Namely, Hilberg \cite{Hilberg90} supposed that the Shannon
entropy of natural language satisfies condition $H(n)\approx
Bn^\beta$, where $\beta\approx 0.5$, and consequently the Shannon
entropy rate might be zero.  Although we have not been convinced that
the Shannon entropy rate of natural language equals zero, for some
time we have been interested in relaxations and strengthenings of
Hilberg's hypothesis, see
\cite{Debowski11b,Debowski15g,Debowski15f,TakahiraOthers16}. In
particular, we have been quite disturbed by the power-law logarithmic
growth of maximal repetition for natural language, which we observed
by the way. In \cite{Debowski15f}, we supposed that it should be
linked with vanishing of some sort of an entropy rate.

This hypothetical entropy rate cannot be the Shannon entropy rate,
however. As also shown by Shields \cite{Shields92b}, bound
(\ref{hmaxrep}) is not tight.  For any stationary ergodic process
$(X_i)_{i=-\infty}^\infty$ and a function $\lambda(n)=o(n)$, there is
a measurable function $f$ of infinite sequences and a stationary
ergodic process $(Y_i)_{i=-\infty}^\infty$, where
$Y_i:=f((X_{i+j})_{j=-\infty}^\infty)$, such that
\begin{align}
  \limsup_{n\rightarrow\infty} \frac{L(Y_1^n)}{\lambda(n)}\ge 1 \text{ a.s.}
\end{align}
Whereas the Shannon entropy rate of process $(Y_i)_{i=-\infty}^\infty$
is smaller than that of process $(X_i)_{i=-\infty}^\infty$, a careful
analysis of the proof shows that the difference between the two can be
made arbitrarily small, cf.\ e.g.\ \cite{Gray90}. Moreover, if we take
$(X_i)_{i=-\infty}^\infty$ to be an IID process, then process
$(Y_i)_{i=-\infty}^\infty$ is mixing and very weak Bernoulli.  Hence
the power-law logarithmic growth of maximal repetition
(\ref{Debowski}) does not imply that the Shannon entropy rate of
natural language is zero or that natural language is not mixing.

In spite of this negative result, in this article, we will show that
the power-law logarithmic growth of maximal repetition is naturally
linked to a power-law growth of some \emph{generalized} block
entropies and vanishing of some \emph{generalized} entropy rates.  For
simplicity let us consider a stationary process
$(X_i)_{i=-\infty}^\infty$.  For a parameter
$\gamma\in(0,1)\cup(1,\infty)$, the block R\'enyi entropy
\cite{Renyi61} is defined as
\begin{align}
  \label{Renyi}
  H_\gamma(n):=\frac{1}{1-\gamma}\log \sum_{x_1^n} P(X_1^n=x_1^n)^{\gamma}.
\end{align}
For $\gamma\in\klam{0,1,\infty}$, we define the block R\'enyi entropy
as
\begin{align}
  H_\gamma(n):=\lim_{\delta\rightarrow\gamma} H_\delta(n).
\end{align}
Some special cases of $H_\gamma(n)$ are:
\begin{enumerate}
\item Hartley entropy $H_0(n)=\log \card\klam{x_1^n: P(X_1^n=x_1^n)>0}$,
\item Shannon entropy $H_1(n)=H(n)=\sred\kwad{-\log P(X_1^n)}$,
\item collision entropy $H_2(n)=-\log \sred P(X_1^n)$,
\item min-entropy $H_\infty(n)=-\log \max_{x_1^n} P(X_1^n=x_1^n)$.
\end{enumerate}
We have $H_\gamma(n)\ge H_\delta(n)$ for $\gamma<\delta$ and
$H_\gamma(n)\le \frac{\gamma}{\gamma-1} H_\infty(n)$ for $\gamma>1$.

In our problem, we will also deal with some conditional R\'enyi
entropies given the infinite past. In the literature, a few
alternative definitions of conditional R\'enyi entropy have been
presented, cf.\ \cite{Arikan96,Berens13}. Here we will use yet another
definition which is algebraically simpler and arises naturally in our
application. For a parameter $\gamma\in(0,1)\cup(1,\infty)$, the
conditional block R\'enyi entropy will be defined as
\begin{align}
  \label{Renyi}
  H^{cond}_\gamma(n):=\frac{1}{1-\gamma}\log \sred \sum_{x_1^n}
  P(X_1^n=x_1^n|X_{-\infty}^0)^{\gamma}.
\end{align}
For $\gamma\in\klam{0,1,\infty}$, we define the conditional block
R\'enyi entropy as
\begin{align}
  H^{cond}_\gamma(n):=\lim_{\delta\rightarrow\gamma} H^{cond}_\delta(n).
\end{align}
We note that for $\gamma>1$ the conditional block
R\'enyi entropy can be written as
\begin{align}
  H^{cond}_\gamma(n)=-\frac{1}{\gamma-1}\log \sred
  \kwad{P(X_1^n|X_{-\infty}^0)}^{\gamma-1}
\end{align}
and hence we obtain the conditional block min-entropy 
\begin{align}
  H^{cond}_\infty(n)=-\log \esssup P(X_1^n|X_{-\infty}^0).
\end{align}
By the Jensen inequality, we have $H^{cond}_\gamma(n)\ge
H^{cond}_\delta(n)$ for $\gamma<\delta$ and $H^{cond}_\gamma(n)\le
H_\gamma(n)$.  In contrast, we need not have $H^{cond}_\gamma(n)\le
\frac{\gamma}{\gamma-1} H^{cond}_\infty(n)$ for $\gamma>1$ (consider
for instance $\gamma=2$). By another application of the Jensen
inequality and by equality $h=\sred\kwad{-\log P(X_1|X_{-\infty}^0)}$,
we obtain the chain of inequalities
\begin{align}
  H^{cond}_\infty(n)\le H_\gamma^{cond}(n)\le hn\le H(n)\le H_0(n),\quad
  \gamma>1.
\end{align}
Resuming, entropies $H_0(n)$ and $H^{cond}_\infty(n)$ are the largest
one and the smallest one of the introduced entropies, respectively.

The above definitions can be partly generalized for nonstationary
processes, as well.  For an arbitrary (possibly nonstationary) process
$(X_i)_{i=1}^\infty$ over a countable alphabet, we generalize the
definition of the block Hartley entropy as
  \begin{align}
    H_0(n):=\log\card\klam{x_1^n:P(X_{m+1}^{m+n}=x_1^n)>0 \text{ for some
        $m\ge 0$}}
  \end{align}
and the conditional block min-entropy as
 \begin{align}
   H^{cond}_\infty(n):=-\log\sup_{m\ge 0}\max_{x_1^{m+n}}
   P(X_{m+1}^{m+n}=x_{m+1}^{m+n}|X_1^m=x_1^m) .
 \end{align}
As we can check easily, entropies $H_0(n)$ and $H^{cond}_\infty(n)$
coincide with the previous definitions for a stationary process.

Concerning the links between the generalized entropies and the maximal
repetition, we will begin with two simple results which consolidate
and generalize earlier observations from
\cite{Shields97,Debowski15f}---and are stated in more generality for
nonstationary processes. Namely, we will show that entropies $H_0(n)$
and $H^{cond}_\infty(n)$ provide an inverse sandwich bound for the
investigated statistic of strings.   

The first proposition bounds the maximal repetition below with the
Hartley entropy.  The smaller is the Hartley entropy, the larger is
the maximal repetition.
\begin{theorem}[cf. \cite{Debowski15f}]
  \label{theoRepetitionHartley}
  For an arbitrary process $(X_i)_{i=1}^\infty$ over a countable
  alphabet, if 
  \begin{align}
    \label{CondHartley} 
    H_0(k) \le Bk^{\beta}
  \end{align}
  for sufficiently large $k$ for certain $B>0$ and $\beta>0$ then
  \begin{align}
    \label{RepetitionHartley}
    L(X_1^n) \ge A(\log n)^{\alpha}
  \end{align}
  for sufficiently large $n$ almost surely, for any $A<
  B^{-\alpha}$ and $\alpha=1/\beta$.
\end{theorem}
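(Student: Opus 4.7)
The plan is a straightforward pigeonhole argument against the Hartley bound. Fix $A<B^{-\alpha}$ with $\alpha=1/\beta$ and set $k_n:=\ceil{A(\log n)^{\alpha}}$. Consider the $n-k_n+1$ substrings $X_{i+1}^{i+k_n}$ of $X_1^n$ for $i=0,1,\ldots,n-k_n$. Almost surely, each such substring has positive marginal probability (since $P(X_{i+1}^{i+k_n}=\cdot)\ge P(X_1^n=\cdot)>0$ a.s.), so each realization lies in the set $\klam{x_1^{k_n}:P(X_{m+1}^{m+k_n}=x_1^{k_n})>0\text{ for some }m\ge 0}$, whose cardinality is by definition $\exp H_0(k_n)$.

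The key step: if $n-k_n+1>\exp H_0(k_n)$, then two of the $n-k_n+1$ substrings must coincide, i.e.\ $X_{i+1}^{i+k_n}=X_{j+1}^{j+k_n}$ for some $0\le i<j\le n-k_n$, so $L(X_1^n)\ge k_n\ge A(\log n)^{\alpha}$. It therefore suffices to verify the asymptotic inequality $n-k_n+1>\exp H_0(k_n)$ for all sufficiently large $n$.

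For that, I invoke the hypothesis $H_0(k)\le Bk^{\beta}$ (valid once $k_n\ge k_0$, which holds for large $n$) to obtain
\begin{align}
  \exp H_0(k_n)\le \exp\okra{B k_n^{\beta}}
  \le \exp\okra{B\okra{A(\log n)^{\alpha}+1}^{\beta}}
  = n^{BA^{\beta}+o(1)}.
\end{align}
Since $A<B^{-\alpha}=B^{-1/\beta}$ implies $BA^{\beta}<1$, the right-hand side is $o(n)$, so the pigeonhole condition $n-k_n+1>\exp H_0(k_n)$ holds for all sufficiently large $n$, and the conclusion (\ref{RepetitionHartley}) follows almost surely.

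There is no real obstacle here; the argument is essentially counting. The only care needed is (i) making the a.s.\ statement rigorous by noting that the collection of substrings of a realization $X_1^n$ a.s.\ lies in the support set counted by $H_0(k)$, and (ii) absorbing the ceiling and the additive $1$ in $k_n$ into the $o(\log n)$ slack, which is possible precisely because $A$ is taken strictly below $B^{-\alpha}$.
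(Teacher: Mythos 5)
Your proposal is correct and follows essentially the same pigeonhole argument as the paper: both count the at most $\exp H_0(k_n)$ admissible length-$k_n$ substrings against the $n-k_n+1$ windows in $X_1^n$ and use $BA^\beta<1$ to verify $\exp H_0(k_n)<n-k_n+1$ for large $n$. Your handling of the ceiling and the a.s.\ positivity of the marginals is a slightly more explicit rendering of the same steps.
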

\begin{proof}
  Since the alphabet is countable, $P(X_{i+1}^{i+k})>0$ holds almost
  surely for all $0\le i<\infty$. Hence block $X_1^n$ contains almost
  surely no more than $\exp(H_0(k))$ different strings of length
  $k$. In particular if $\exp(H_0(k))<n-k+1$ then block $X_1^n$
  contains a repeat of length $k$, i.e., $L(X_1^n)\ge k$. Assume that
  $H_0(k)\le Bk^\beta$ holds for sufficiently large $k$. If we put
  $k_n= A\okra{ \log n }^{1/\beta}$ where $A< B^{-1/\beta}$ then we
  obtain $H_0(k_n)\le Bk_n^\beta<\log(n-k_n+1)$ for sufficiently large
  $n$. Hence $L(X_1^n)\ge k_n$ almost surely.
\end{proof}
Theorem \ref{theoRepetitionHartley} was proved in \cite{Debowski15f}
for stationary processes. In \cite{Debowski15g}, some stationary
processes were constructed that satisfy both condition $H_0(n)\approx
Bn^\beta$ and condition $L(X_1^n) \approx A(\log n)^{\alpha}$ for an
arbitrary $\beta$ and $\alpha=1/\beta$.

In the second proposition we will bound the maximal repetition above with the
conditional min-entropy. Before, let us make a simple observation that the
conditional min-entropy is superadditive,
\begin{align}
  H^{cond}_\infty(m+n)\ge  H^{cond}_\infty(m) +H^{cond}_\infty(n)
  .
\end{align}
Hence by the Fekete lemma, we have
\begin{align}
  \lim_{n\rightarrow\infty} \frac{H^{cond}_\infty(n)}{n}=\sup_{n\ge 0}
  \frac{H^{cond}_\infty(n)}{n}
\end{align}
and consequently this limit equals zero if and only if
$H^{cond}_\infty(n)=0$ for all $n$.  Now, we observe that the larger
is the conditional min-entropy, the smaller is the maximal repetition.
\begin{theorem}[cf. \cite{Shields97}]
  \label{theoRepetitionMinEntropy}
  For an arbitrary process $(X_i)_{i=1}^\infty$ over a countable
  alphabet, if
  \begin{align}
    \label{CondMinEntropy}
    H^{cond}_\infty(k) \ge Bk
  \end{align}
  for sufficiently large $k$ for a certain $B>0$ then
  \begin{align}
    \label{RepetitionMinEntropy}
    L(X_1^n) < A\log n
  \end{align}
  for sufficiently large $n$ almost surely, for any $A> 3B^{-1}$.
\end{theorem}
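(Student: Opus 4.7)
The plan is to bound $P(L(X_1^n)\ge k)$ by a first-moment argument: take a union bound over all candidate starting-position pairs for a repeated block of length $k$, and then apply the Borel--Cantelli lemma once $k$ is chosen slightly above $3B^{-1}\log n$. Concretely, since the conclusion is a strict inequality, I would pick some auxiliary $A'\in(3B^{-1},A)$ and set $k_n:=\ceil{A'\log n}$, so that $k_n\le A\log n$ for all sufficiently large $n$. The starting inclusion is
\begin{align*}
  \klam{L(X_1^n)\ge k}\subseteq \bigcup_{0\le i<j\le n-k} \klam{X_{i+1}^{i+k}=X_{j+1}^{j+k}},
\end{align*}
and this union contains at most $\binom{n-k+1}{2}\le n^2/2$ events.

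The heart of the argument is the per-pair estimate $P\okra{X_{i+1}^{i+k}=X_{j+1}^{j+k}}\le \exp(-Bk)$. My strategy is to condition on $X_1^j$ and split into two subcases. If $j\ge i+k$ (non-overlapping), the block $X_{i+1}^{i+k}$ is measurable with respect to $X_1^j$, so the event becomes ``$X_{j+1}^{j+k}$ equals one prescribed continuation of length $k$''. If instead $d:=j-i<k$ (overlapping), the equality is equivalent to $X_{i+1}^{i+d+k}$ being periodic with period $d$, and since $X_{i+1}^{i+d}$ is already part of $X_1^j$, the periodic extension again determines $X_{j+1}^{j+k}$ uniquely given $X_1^j$. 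In both cases the hypothesis $H^{cond}_\infty(k)\ge Bk$ bounds the conditional probability of that deterministic continuation by $\exp(-Bk)$, and taking expectation over $X_1^j$ preserves the bound.

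Combining the two inputs yields $P(L(X_1^n)\ge k_n)\le (n^2/2)\exp(-Bk_n)\le C n^{2-BA'}$ for a constant $C$. Since $A'>3B^{-1}$ gives $2-BA'<-1$, the series $\sum_n P(L(X_1^n)\ge k_n)$ converges, and Borel--Cantelli delivers $L(X_1^n)<k_n\le A\log n$ eventually almost surely.

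The step I expect to require the most care is the overlapping subcase: one has to verify that, even when the two occurrences share symbols, the event still pins down a full $k$ fresh symbols (namely $X_{j+1}^{j+k}$) to one deterministic target given $X_1^j$, so that the conditional min-entropy hypothesis contributes the full exponent $Bk$ and not something weaker like $B(k-d)$. Once this is handled, the rest reduces to the standard first-moment/Borel--Cantelli machinery.
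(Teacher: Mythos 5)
Your proposal is correct and follows essentially the same route as the paper's proof: a union bound over pairs $0\le i<j\le n-k$, conditioning on $X_1^j$ so that the event forces $X_{j+1}^{j+k}$ to a single string whose conditional probability is at most $\exp(-H^{cond}_\infty(k))\le\exp(-Bk)$, and then Borel--Cantelli with $k_n\approx A\log n$. Your explicit treatment of the overlapping case via periodicity is a point the paper's displayed computation glosses over, and your use of an auxiliary $A'\in(3B^{-1},A)$ with $k_n=\ceil{A'\log n}$ tidies up the integer-rounding issue, but neither changes the substance of the argument.
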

\begin{proof}
We have
\begin{align}
P(L(X_1^n)\ge k)
&= P\okra{X_{i+1}^{i+k}=X_{j+1}^{j+k} \text{ for some }
    0\le i<j\le n-k}
\nonumber\\
&\le\sum_{0\le i<j\le n-k} P(X_{i+1}^{i+k}=X_{j+1}^{j+k})
\nonumber\\
&=\sum_{0\le i<j\le n-k} \sum_{x_1^{j}}
P(X_1^{j}=x_1^{j})P(X_{j+1}^{j+k}=x_{i+1}^{i+k}|X_1^{j}=x_1^{j})
\nonumber\\
&\le
\sum_{0\le i<j\le n-k} \exp(-H^{cond}_\infty(k))\le n^2 \exp(-H^{cond}_\infty(k)).
\end{align}
Assume now that $H^{cond}_\infty(k)\ge Bk$ holds for sufficiently
large $k$. If we put $k_n=A\log n$ then we obtain
\begin{align}
  \sum_{n=1}^\infty P(L(X_1^n)\ge k_n)\le \sum_{n=1}^\infty n^2
  \exp(-H^{cond}_\infty(k_n))\le C+\sum_{n=1}^\infty n^{2-BA},
\end{align}
which is finite if $A>3B^{-1}$.  Hence by the
Borel-Cantelli lemma, we obtain that
$L(X_1^n)< A\log n$ for sufficiently large $n$ almost surely.
\end{proof}

Inequality (\ref{RepetitionMinEntropy}) was demonstrated in
\cite{Shields97}, using a somewhat complicated technique involving
source coding, for processes satisfying the equivalent finite energy
condition
\begin{align}
  \label{FiniteEnergy}
  P(X_{m+1}^{m+n}=x_{m+1}^{m+n}|X_1^m=x_1^m)\le Kc^n, 
\end{align}
where $K>0$ and $0<c<1$.  Condition (\ref{FiniteEnergy}) appears
intuitive. We would expect it from well-behaved processes.  In fact,
finite energy processes include typical hidden Markov processes,
uniformly dithered processes, processes satisfying the Doeblin
conditions, as well as nonatomic $\psi$-mixing processes.


For clarity and completeness, let us state the respective results
formally. A discrete process $(Y_i)_{i=-\infty}^\infty$ is called a
hidden Markov process if $Y_i=f(X_i)$ for a certain function $f$ and a
discrete Markov process $(X_i)_{i=-\infty}^\infty$.
\begin{theorem}
  \label{theoHMM1}
  For a stationary hidden Markov process $(Y_i)_{i=-\infty}^\infty$
  let the underlying Markov process be
  $(X_i)_{i=-\infty}^\infty$. Process $(Y_i)_{i=-\infty}^\infty$ is
  finite energy if
  \begin{align}
    \label{CondHMM1}
    c:=\sup_{y,x}P(Y_i=y|X_{i-1}=x)<1
    .
  \end{align}
\end{theorem}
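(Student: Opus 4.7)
The plan is to show that for the hidden Markov process one can take $K=1$ in the finite energy condition (\ref{FiniteEnergy}). The key idea is to ``lift'' the conditioning from the observed string $Y_1^m$ to the hidden state $X_m$, which is possible because once $X_m$ is fixed the future of the process is independent of everything before time $m$.

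Concretely, the first step is the decomposition
\begin{align*}
  P(Y_{m+1}^{m+n}=y_{m+1}^{m+n}\mid Y_1^m=y_1^m)
  =\sum_{x_m} P(X_m=x_m\mid Y_1^m=y_1^m)\,
  P(Y_{m+1}^{m+n}=y_{m+1}^{m+n}\mid X_m=x_m),
\end{align*}
where the second factor no longer depends on $Y_1^m$ by the Markov property of $(X_i)$ together with $Y_i=f(X_i)$. Since the weights $P(X_m=x_m\mid Y_1^m)$ sum to one, it suffices to bound the quantity $P(Y_{m+1}^{m+n}=y_{m+1}^{m+n}\mid X_m=x_m)$ uniformly in $x_m$.

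The second step is an induction on $n$ showing that this uniform bound equals $c^n$. One writes
\begin{align*}
  P(Y_{m+1}^{m+n}\mid X_m=x_m)
  =\sum_{x_{m+1}} P(Y_{m+1}=y_{m+1}\mid X_m=x_m)\,
  P(X_{m+1}=x_{m+1}\mid Y_{m+1}=y_{m+1},X_m=x_m)\,
  P(Y_{m+2}^{m+n}\mid X_{m+1}=x_{m+1}),
\end{align*}
using in the last factor that $(Y_{m+2},\dots,Y_{m+n})$ depends on the past only through $X_{m+1}$. The first factor is bounded by $c$ by hypothesis (\ref{CondHMM1}), the middle factor sums to one over $x_{m+1}$, and the last factor is $\le c^{n-1}$ by the inductive hypothesis, yielding $c^n$.

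Combining the two steps gives $P(Y_{m+1}^{m+n}\mid Y_1^m)\le c^n$, so (\ref{FiniteEnergy}) holds with $K=1$. The only delicate point, and essentially the whole content of the argument, is to check the conditional independences needed in the two factorizations, i.e.\ that conditioning on $X_m$ (respectively $X_{m+1}$) together with some observed $Y$'s is the same as conditioning on $X_m$ (respectively $X_{m+1}$) alone, which follows from the Markov property of $(X_i)$ and the deterministic relation $Y_i=f(X_i)$.
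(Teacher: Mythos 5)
Your proof is correct and rests on the same key step as the paper's: decomposing over the hidden state $X_m$ and using the hypothesis $P(Y_{m+1}=y\mid X_m=x)\le c$ together with the conditional independence of future observations from past observations given the current hidden state. The only difference is organizational --- the paper bounds the single-symbol probability $P(Y_{m+1}=y_{m+1}\mid Y_1^m=y_1^m)\le c$ and then chains over the observed process, whereas you induct along the hidden chain to bound the whole block $P(Y_{m+1}^{m+n}\mid X_m=x_m)\le c^n$ at once; both yield (\ref{FiniteEnergy}) with $K=1$.
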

\begin{proof}
  By conditional independence of $Y_{m+1}$ and $Y_{1}^{m}$ given
  $X_m$,
  \begin{align}
    &P\okra{Y_{m+1}=y_{m+1}|Y_{1}^{m}=y_{1}^{m}}
    \nonumber
    \\
    &=
    \sum_{x_m} 
    P\okra{Y_{m+1}=y_{m+1}|X_m=x_m}
    P\okra{X_m=x_m|Y_{j_{1}^{m}}=y_{1}^{m}}
    \nonumber
    \\
    &\le 
    \sum_{x_m} c
    P\okra{X_m=x_m|Y_{1}^{m}=y_{1}^{m}}
    =c
    .
  \end{align}
  Thus process $(Y_i)_{i=-\infty}^\infty$ is finite energy.
\end{proof}

Another subclass of finite energy processes are uniformly dithered
processes, which generalize a construction by Shields
\cite{Shields97}.  Let $(\mathbb{X},*)$ be a group.  A stochastic
process $(X_i)_{i=-\infty}^\infty$ over the alphabet $\mathbb{X}$ is
called uniformly dithered if it satisfies $X_i=W_i*Z_i$, where
$(W_i)_{i=-\infty}^\infty$ is an arbitrary process over the alphabet
$\mathbb{X}$ and $(Z_i)_{i=-\infty}^\infty$ is an independent IID
process with $P(Z_i=a)\le c<1$.
\begin{theorem}[\cite{Debowski15f}] 
  \label{theoUD}
  Any uniformly dithered process is a finite energy process.
\end{theorem}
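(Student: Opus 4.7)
The plan is to exploit the group structure by inverting the dithering relation: since $X_i = W_i \ast Z_i$, on the event $\{W_i = w_i\}$ the event $\{X_i = x_i\}$ coincides with $\{Z_i = w_i^{-1} \ast x_i\}$. The natural move is to enlarge the conditioning $\sigma$-algebra $\sigma(X_1^m)$ by adjoining $W_{m+1}^{m+n}$, prove the bound $c^n$ inside this enlarged conditioning (where $Z_{m+1}^{m+n}$ is cleanly isolated), and then average the bound back down by the tower property.

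The central technical step I would carry out is the following independence claim: conditional on $(X_1^m, W_{m+1}^{m+n})$, the block $(Z_{m+1}, \ldots, Z_{m+n})$ is still IID with the same marginal as $Z_1$. To justify this, I would note that $(X_1^m, W_{m+1}^{m+n})$ is a measurable function of $(W_1^{m+n}, Z_1^m)$, while by the definition of a uniformly dithered process the IID sequence $(Z_i)$ is independent of the whole $W$-process. Hence $(Z_{m+1},\ldots,Z_{m+n})$ is independent of $(W_1^{m+n}, Z_1^m)$, and a fortiori of the coarser $\sigma(X_1^m, W_{m+1}^{m+n})$. Combining this with the marginal bound $P(Z_i = a) \le c$ yields, for any $z_{m+1}^{m+n}$,
\begin{align}
P\okra{Z_{m+1}^{m+n} = z_{m+1}^{m+n} \mid X_1^m, W_{m+1}^{m+n}} \le c^n.
\end{align}

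Once this is in hand, I would translate back to $X$'s: on $\{W_{m+1}^{m+n} = w_{m+1}^{m+n}\}$ I set $z_{m+i} = w_{m+i}^{-1} \ast x_{m+i}$ and invoke the above inequality to obtain
\begin{align}
P\okra{X_{m+1}^{m+n} = x_{m+1}^{m+n} \mid X_1^m, W_{m+1}^{m+n}} \le c^n
\end{align}
almost surely. Taking conditional expectation with respect to $\sigma(X_1^m)$ removes $W_{m+1}^{m+n}$ and yields the finite energy bound (\ref{FiniteEnergy}) with $K=1$.

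I expect the only real obstacle to be bookkeeping of the independence: one has to be careful that $(Z_i)$ is assumed independent of the entire sequence $(W_i)$, not merely of the past, so that $W_{m+1}^{m+n}$ may safely be put into the conditioning without disturbing the IID nature of the future dither $Z_{m+1}^{m+n}$. Beyond this, the argument is a one-line application of the tower property and the pointwise bound $P(Z_i = a) \le c$.
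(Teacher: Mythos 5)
Your proof is correct. Note that the paper itself gives no proof of Theorem \ref{theoUD} -- it only cites \cite{Debowski15f} -- so there is nothing here to compare against line by line; but your argument (enlarge the conditioning to $\sigma(X_1^m, W_{m+1}^{m+n})$, observe that this $\sigma$-algebra is generated by a function of $(W_1^{m+n}, Z_1^m)$ and hence independent of $Z_{m+1}^{m+n}$, invert the group relation to turn the event $\{X_{m+1}^{m+n}=x_{m+1}^{m+n}\}$ into a single $Z$-atom on each $W$-atom, and average down by the tower property) is complete and is the natural route, yielding (\ref{FiniteEnergy}) with $K=1$. You correctly identify and handle the one delicate point, namely that $(Z_i)$ must be independent of the entire process $(W_i)$ rather than only of its past for $W_{m+1}^{m+n}$ to be admissible in the conditioning.
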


Let us observe that for a stationary process, condition
(\ref{FiniteEnergy}) is equivalent to 
\begin{align}
  \label{FiniteEnergyII}
  P(X_1^n=x_1^n|X_{-\infty}^0)\le Kc^n \text{ a.s.}
\end{align}
by the martingale convergence. There are two related Doeblin
conditions
\begin{align}
  \label{DoeblinI}
  P(X_r=x_r|X_{-\infty}^0)&\ge d \text{ a.s.},
  \\
  \label{DoeblinII}
  P(X_r=x_r|X_{-\infty}^0)&\le D \text{ a.s.}
\end{align}
for some $r\ge 1$ and $0<d,D<1$,
cf. \cite{GaoKontoyiannisBienenstock08,KontoyiannisOthers98}. The
first condition, can be satisfied for a finite alphabet only.
\begin{theorem}
  \label{theoDoeblin}
  If a process assuming more than one value satisfies condition
  (\ref{DoeblinI}) then it satisfies condition
  (\ref{DoeblinII}). Moreover, if a stationary process satisfies
  condition (\ref{DoeblinII}) then it is finite energy.
\end{theorem}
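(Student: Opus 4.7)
The plan is to prove the two implications separately, with the first being a simple combinatorial argument on the conditional distribution and the second a chain-rule factorization combined with stationarity.

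For the first implication, I would start by fixing a version of the conditional distribution and observing that $\sum_{x} P(X_r = x | X_{-\infty}^0) = 1$ holds almost surely. If condition (\ref{DoeblinI}) holds a.s.\ for every symbol $x$ in the alphabet $\mathbb{X}$, summing this bound forces $\card \mathbb{X} \le 1/d$, so the alphabet is finite. Because the process takes more than one value, $\card \mathbb{X} \ge 2$, hence for any $x_r$ we can write
\begin{align}
  P(X_r = x_r | X_{-\infty}^0) = 1 - \sum_{y \neq x_r} P(X_r = y | X_{-\infty}^0) \le 1 - (\card\mathbb{X} - 1)d \le 1 - d
\end{align}
almost surely, so (\ref{DoeblinII}) holds with $D := 1 - d < 1$.

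For the second implication, I would use the chain rule
\begin{align}
  P(X_1^n = x_1^n | X_{-\infty}^0) = \prod_{i=1}^n P(X_i = x_i | X_{-\infty}^{i-1}).
\end{align}
Bounding all factors by $1$ except those at positions $i = r, 2r, 3r, \ldots$ and invoking stationarity together with (\ref{DoeblinII}) at each retained factor, I obtain
\begin{align}
  P(X_1^n = x_1^n | X_{-\infty}^0) \le D^{\floor{n/r}} \le D^{-1}\okra{D^{1/r}}^n
\end{align}
almost surely. Setting $K := D^{-1}$ and $c := D^{1/r} \in (0,1)$ yields the finite energy condition (\ref{FiniteEnergyII}), which is equivalent to (\ref{FiniteEnergy}) by the martingale convergence argument already cited.

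There is no real obstacle in this proof; the only subtlety is the careful bookkeeping in the first implication, where one must make sure the a.s.\ lower bound holds simultaneously for all alphabet symbols (using that the alphabet is countable, so the exceptional null set is a countable union and still has measure zero). Everything else is a direct computation.
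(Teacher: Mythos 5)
Your first implication is fine and matches the paper's argument: since the process assumes at least two values, for any $x_r$ there is some $y\neq x_r$ with $P(X_r=y|X_{-\infty}^0)\ge d$ a.s., whence $P(X_r=x_r|X_{-\infty}^0)\le 1-d=:D$. The extra remarks about finiteness of the alphabet and the countable union of null sets are harmless but not needed.

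The second implication has a genuine gap. In your chain-rule decomposition the retained factor at position $i=jr$ is $P(X_{jr}=x_{jr}\,|\,X_{-\infty}^{jr-1})$, i.e.\ it conditions on the entire past up to time $jr-1$. Condition (\ref{DoeblinII}) together with stationarity only gives $P(X_{jr}=x_{jr}\,|\,X_{-\infty}^{jr-r})\le D$ a.s., which conditions on a strictly coarser $\sigma$-field when $r\ge 2$. An almost sure upper bound on a conditional expectation given a coarser $\sigma$-field does not transfer to a finer one (take $Z=\boole{A}$ with $P(A)\le D$: conditioning on $\sigma(A)$ yields the value $1$ on $A$), and indeed one expects $P(X_{jr}=x_{jr}\,|\,X_{-\infty}^{jr-1})$ to exceed $D$ on a set of positive measure for processes with strong one-step dependence for which (\ref{DoeblinII}) only holds at some lag $r>1$. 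The repair is to decompose into blocks of length $r$ rather than single symbols: by monotonicity of conditional probability under inclusion of events,
\begin{align}
  P\okra{X_{(j-1)r+1}^{jr}=x_{(j-1)r+1}^{jr}\,\middle|\,X_{-\infty}^{(j-1)r}}
  \le P\okra{X_{jr}=x_{jr}\,\middle|\,X_{-\infty}^{(j-1)r}}\le D ,
\end{align}
where now the conditioning is exactly $r$ steps behind $X_{jr}$ and stationarity applies; multiplying the $\floor{n/r}$ block factors and bounding the remainder by $1$ gives $P(X_1^n=x_1^n|X_{-\infty}^0)\le D^{\floor{n/r}}\le D^{n/r-1}$, which is precisely how the paper argues. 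Your final numerical bound is correct, but the step you use to reach it would fail as written.
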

\begin{proof}
  First, assume condition (\ref{DoeblinI}). Then obviously
  \begin{align}
    P(X_r=x_r|X_{-\infty}^0)=1-\sum_{x'_r\neq x_r}
    P(X_r=x'_r|X_{-\infty}^0)\le 1-d=:D ,
  \end{align}
  so we obtain condition (\ref{DoeblinII}). Next, assume condition
  (\ref{DoeblinII}). Then 
  \begin{align}
    P(X_1^r=x_1^r|X_{-\infty}^0)\le
    P(X_r=x_r|X_{-\infty}^0)
    \le D
  \end{align}
  and, by stationarity, $P(X_1^n=x_1^n|X_{-\infty}^0)\le
  D^{\floor{n/r}}\le D^{n/r-1}$, so 
  (\ref{FiniteEnergyII}) follows.
\end{proof}
Independently, in \cite{KontoyiannisSuhov94}, inequality
(\ref{RepetitionMinEntropy}) was established for stationary processes
that satisfy condition (\ref{DoeblinI}).

The last subclass of finite energy processes which we are going to
discuss are nonatomic $\psi$-mixing processes. For a stationary 
process $(X_i)_{i=-\infty}^\infty$ define 
\begin{align}
  \psi(n)=\sup_{i,j\ge 1}\esssup
  \kwad{\frac{P(X_{-j}^0,X_n^{n+i})}{P(X_{-j}^0)P(X_n^{n+i})}-1}
  .
\end{align}
The process is called $\psi$-mixing if
$\lim_{n\rightarrow\infty}\psi(n)=0$.
\begin{theorem}
  A $\psi$-mixing stationary process $(X_i)_{i=-\infty}^\infty$ is
  finite energy if
  \begin{align}
    \label{NoAtoms}
    \lim_{n\rightarrow\infty}\esssup P(X_1^n)=0.
  \end{align}
\end{theorem}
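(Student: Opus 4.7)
The plan is to cover $X_{m+1}^{m+n}$ by short sub-blocks of length $N$ separated by gaps of length $g$, where $N$ is chosen large enough for (\ref{NoAtoms}) to make each sub-block's marginal probability small, and $g$ is chosen large enough for $\psi$-mixing to give near-independence across the gaps. Iterated conditioning across the gaps then delivers the exponential decay required by (\ref{FiniteEnergy}).

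Since $\psi(n)\to 0$, pick $g\ge 1$ with $\psi(g+1)\le 1$; by (\ref{NoAtoms}) pick $N\ge 1$ with $\esssup P(X_1^N)\le 1/4$. Set $L:=N+g$ and $\tilde c:=(1+\psi(g+1))\esssup P(X_1^N)\le 1/2$. For $k=1,\dots,J$ with $J:=\floor{(n+g)/L}\ge n/L-1$, define sub-blocks $Y_k:=X_{m+(k-1)L+1}^{m+(k-1)L+N}$. The chain rule yields
\begin{align}
  P(X_{m+1}^{m+n}=x_{m+1}^{m+n}|X_1^m=x_1^m)
  \le \prod_{k=1}^J P(Y_k=y^{(k)}|A_{k-1}),
\end{align}
where $A_{k-1}$ is the event that $X_1^m=x_1^m$ and $Y_j=y^{(j)}$ for all $j<k$.

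The core step is to bound each factor by $\tilde c$ for $k\ge 2$. Since $A_{k-1}$ is measurable with respect to $\sigma(X_1^{M_k})$ for $M_k:=m+(k-2)L+N$, while $Y_k$ occupies positions $M_k+g+1,\dots,M_k+g+N$, one writes $A_{k-1}$ as a disjoint union of cylinders $\klam{X_1^{M_k}=z}$ and applies to each the pointwise $\psi$-mixing bound $P(Y_k=y^{(k)}|X_1^{M_k}=z)\le (1+\psi(g+1))P(Y_k=y^{(k)})$ (valid after a stationary shift, since the separation between $M_k$ and $M_k+g+1$ is $g+1$). Summing over $z$ yields $P(Y_k=y^{(k)}|A_{k-1})\le (1+\psi(g+1))\esssup P(X_1^N)=\tilde c$. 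For $k=1$ use the trivial bound $\le 1$. The product is then at most $\tilde c^{J-1}\le \tilde c^{-2}\okra{\tilde c^{1/L}}^n$, which is (\ref{FiniteEnergy}) with $K:=\tilde c^{-2}$ and $c:=\tilde c^{1/L}<1$.

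The main obstacle is essentially bookkeeping: $\psi$-mixing as stated controls the dependence of a single contiguous future block on a single contiguous past block, whereas the iteration conditions on a non-contiguous past (the already-fixed $Y_j$'s together with $X_1^m$). This is resolved by enlarging the conditioning $\sigma$-algebra to the full contiguous interval $[1,M_k]$ and decomposing $A_{k-1}$ as a disjoint union of cylinders on that interval, on each of which $\psi$-mixing applies directly.
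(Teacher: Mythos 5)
Your proof is correct, but it takes a genuinely different route from the paper. The paper's proof is essentially a reduction to a cited result: it observes that a $\psi$-mixing process satisfies the ``simple mixing'' bound (\ref{SimpleMixing}), invokes Corollary 4.4 of the cited work of Ko to conclude that simple mixing together with (\ref{NoAtoms}) forces the \emph{unconditional} exponential decay $P(X_1^n)\le c^n$, and then applies (\ref{SimpleMixing}) once more to transfer this decay to the conditional probability, yielding (\ref{FiniteEnergy}). Your argument is instead self-contained: the block--gap decomposition with a single gap length $g$ chosen so that $\psi(g+1)\le 1$, combined with $N$ chosen so that $\esssup P(X_1^N)\le 1/4$, produces the conditional exponential bound directly, and the bookkeeping (the cylinder decomposition of $A_{k-1}$ over the contiguous interval $[1,M_k]$, the count $J\ge n/L-1$, and the constants $K=\tilde c^{-2}$, $c=\tilde c^{1/L}$) all checks out. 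What your approach buys is independence from the external citation and a weaker quantitative hypothesis: you only need $\psi$ finite and at most $1$ at one sufficiently large separation, whereas simple mixing as used in the paper requires a uniform constant $K$ over all gap lengths, which from the bare definition $\lim_{n\to\infty}\psi(n)=0$ requires a (standard but unstated) monotonicity or finiteness argument for small $n$. What the paper's route buys is brevity and the intermediate fact $P(X_1^n)\le c^n$, which is of independent interest. The only loose end in your write-up is the degenerate case $M_k=1$ (conditioning on a past of length one, formally outside the range $j\ge 1$ in the definition of $\psi$), which is harmless since conditioning on a shorter past is an average over longer ones; it deserves a one-line remark at most.
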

\begin{proof}
  A stationary process $(X_i)_{i=-\infty}^\infty$ has been called
  simple mixing in \cite{Ko12} if
\begin{align}
  \label{SimpleMixing}
  P(X_{-j}^0,X_n^{n+i})\le KP(X_{-j}^0)P(X_n^{n+i})
\end{align}
for all $n,i,j\ge 1$ and a $K>0$. Obviously, any $\psi$-mixing process
is simple mixing. It has been shown in \cite[Corollary 4.4]{Ko12} that
if a simple mixing process $(X_i)_{i=-\infty}^\infty$ satisfies
(\ref{NoAtoms}) then $P(X_1^n)\le c^n$ for some $0<c<1$. In
consequence, any such process has the finite energy property by
condition (\ref{SimpleMixing}).
\end{proof}


There is an interesting application of the above results to natural
language. Although hidden Markov processes are some classical models
in computational linguistics
\cite{Jelinek97,ManningSchutze99,Rosenfeld00}, their insufficiency as
models of natural language was often claimed earlier, cf.\
\cite{Rosenfeld00}. Using Theorems \ref{theoRepetitionMinEntropy},
\ref{theoHMM1}, and \ref{theoDoeblin} and the empirical observation of
the power-law logarithmic growth of maximal repetition
(\ref{Debowski}), we can provide a rigorous way of demonstrating that
natural language is not a typical hidden Markov process, cf.\ a
different approach to this question in \cite{LinTegmark17}, and does
not even satisfy the Doeblin condition, contrary to an empirically
unsupported assertion in \cite{KontoyiannisOthers98}. Simply, as we
have stated in the previous paragraph, typical hidden Markov processes
and processes satisfying the Doeblin condition are finite energy,
whereas the power-law logarithmic growth (\ref{Debowski}) by Theorem
\ref{theoRepetitionMinEntropy} excludes the class of finite energy
processes.

Let us come back to the main thread.  In view of Theorems
\ref{theoRepetitionHartley} and \ref{theoRepetitionMinEntropy}, the
hyperlogarithmic growth of maximal repetition can be connected to
vanishing of the Hartley entropy rate and the conditional min-entropy,
as follows,
\begin{align}
  \limsup_{n\rightarrow\infty} \frac{H_0(n)}{n}=0
  &\implies
  \liminf_{n\rightarrow\infty} \frac{L(X_1^n)}{\log n}=\infty \text{ a.s.}
  ,
  \\
  \limsup_{n\rightarrow\infty} \frac{L(X_1^n)}{\log n}=\infty \text{ a.s.}
  &\implies
  H^{cond}_\infty(n)=0,\quad n\ge 1
  .
\end{align}
Since the difference between $H_0(n)$ and $H^{cond}_\infty(n)$ can be
arbitrarily large, we can ask a question whether the gap between the
upper bound and the lower bound for the maximal repetition can be
narrowed. The natural step is to consider other generalized entropies.

Now we can present some strengthening of Theorems
\ref{theoRepetitionHartley} and \ref{theoRepetitionMinEntropy}, which
constitutes the main result of this article. The first proposition
bounds the maximal repetition below with the Shannon entropy.  The
smaller is the Shannon entropy, the larger is the maximal repetition.
\begin{theorem}
  \label{theoRepetitionShannon}
  For a stationary process $(X_i)_{i=-\infty}^\infty$ over a countable
  alphabet, if 
  \begin{align}
    \label{CondShannon}
    H(k) \le Bk^{\beta}
  \end{align}
  for sufficiently large $k$ for certain $B>0$ and $\beta>0$ then
  \begin{align}
    \label{RepetitionShannon}
    L(X_1^n) > (\log n)^{\alpha}
  \end{align}
  for sufficiently large $n$ almost surely, for any $\alpha<1/\beta$.
\end{theorem}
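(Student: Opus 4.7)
The plan is to exploit a recurrence-time inequality that runs parallel to the one behind Theorem \ref{theoRepetitionMinEntropy} but drives maximal repetition upward. Define the recurrence time
\begin{align}
R_k := \inf\klam{j\ge 1: X_{j+1}^{j+k} = X_1^k}.
\end{align}
The inequality form of Kac's lemma applied to the event $\klam{X_1^k = w}$ gives $\sred\kwad{R_k \mid X_1^k = w} \le 1/P(X_1^k = w)$. Jensen's inequality applied to the concave map $\log$ then yields $\sred\kwad{\log R_k \mid X_1^k = w} \le -\log P(X_1^k = w)$, and averaging over $w$ produces the key estimate $\sred\kwad{\log R_k} \le \sred\kwad{-\log P(X_1^k)} = H(k)$. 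By Markov's inequality, $P(R_k > t) \le H(k)/\log t$ for every $t>1$. Since $\klam{R_k \le n-k}$ forces $X_1^k$ and $X_{R_k+1}^{R_k+k}$ to be two identical length-$k$ substrings of $X_1^n$,
\begin{align}
P(L(X_1^n) < k) \le P(R_k > n-k) \le \frac{H(k)}{\log(n-k)}.
\end{align}

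Next I would upgrade this probability-in-one-$n$ estimate to an almost-sure statement by a Borel-Cantelli argument along a sparse subsequence, exploiting monotonicity $L(X_1^n) \ge L(X_1^m)$ for $n \ge m$. Set $n_j := 2^{2^j}$ and $k_j := \ceil{(\log n_{j+1})^\alpha}$. The hypothesis $H(k_j) \le B k_j^\beta$, combined with $\alpha\beta < 1$, reduces the bound at $n = n_j$, $k = k_j$ to something of order $C\cdot 2^{-(1-\alpha\beta)j}$, which is summable in $j$. Borel-Cantelli therefore yields $L(X_1^{n_j}) \ge k_j$ eventually almost surely, and for $n \in [n_j, n_{j+1})$ monotonicity gives
\begin{align}
L(X_1^n) \ge L(X_1^{n_j}) \ge \ceil{(\log n_{j+1})^\alpha} \ge (\log n_{j+1})^\alpha > (\log n)^\alpha
\end{align}
with strict inequality for $n < n_{j+1}$; the boundary case $n = n_{j+1}$ is absorbed by stepping to index $j+1$. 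Hence $L(X_1^n) > (\log n)^\alpha$ eventually almost surely.

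The main obstacle is the Kac-style bound $\sred\kwad{\log R_k} \le H(k)$. For ergodic stationary processes this is textbook, and the inequality form survives without ergodicity because orbits which never return to $\klam{X_1^k = w}$ only strengthen the direction. An alternative route flagged in the abstract goes through subword complexity: one bounds the expected number of distinct length-$k$ substrings by $\sred\kwad{N_k(X_1^n)} \le \sum_w \min(1, MP(w))$ with $M = n-k+1$, splits the sum at a threshold $-\log P(w) = s$ to get $\sred\kwad{N_k(X_1^n)} \le e^s + M H(k)/s$ via Markov, optimizes $s \approx \log M$, and applies $P(L(X_1^n) < k) \le \sred\kwad{N_k(X_1^n)}/M$; this reproduces the same bound up to constants and the Borel-Cantelli argument goes through identically.
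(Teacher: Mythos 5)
Your proposal is correct and follows essentially the same route as the paper's first proof: bound $P(L(X_1^n)<k)$ by $P(R_k>n-k)$, control $\sred\log R_k$ by $H(k)$ via Kac's lemma, and then run Borel--Cantelli along a sparse subsequence together with monotonicity of $L$. The only (harmless) deviation is that you obtain $\sred\log R_k\le H(k)$ directly from Kac plus Jensen, whereas the paper derives the slightly weaker $\sred\log R_k\le H(k)+1$ by integrating the tail bound $P(R_k\ge C/P(X_1^k))\le C^{-1}$; your sketched subword-complexity alternative is precisely the paper's second proof.
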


In contrast, the second proposition bounds the maximal repetition
above in terms of the conditional R\'enyi entropy of order $\gamma>1$
given a sufficiently long but finite past.  For a stationary process
$(X_i)_{i=-\infty}^\infty$ over a finite alphabet $\mathbb{X}$ and
$\gamma>1$, let us write $N(n):=(\card\mathbb{X})^n$ and
  \begin{align}
    \tilde H^{cond}_\gamma(n):=-\frac{1}{\gamma-1}\log \sred
    \kwad{P(X_1^n|X_{-N(n)}^0)}^{\gamma-1} ,
  \end{align}
  where $H_\gamma(n)\ge \tilde H^{cond}_\gamma(n)\ge
  H^{cond}_\gamma(n)$ by the Jensen inequality. The larger is the
  entropy $\tilde H^{cond}_\gamma(n)$, the smaller is the maximal
  repetition.
\begin{theorem}
  \label{theoRepetitionRenyi}
  For a stationary process $(X_i)_{i=-\infty}^\infty$ over a finite
  alphabet and a $\gamma>1$, if 
  \begin{align}
    \label{CondRenyi}
    \tilde H^{cond}_\gamma(k) \ge Bk^{\beta}
  \end{align}
  for sufficiently large $k$ for certain $B>0$ and $\beta>0$ then
  \begin{align}
    \label{RepetitionRenyi}
    L(X_1^n) < A(\log n)^{\alpha}
  \end{align}
  for sufficiently large $n$ almost surely, for any $A>
  \kwad{\gamma\cdot\frac{\gamma+1}{\gamma-1}}^{\alpha}B^{-\alpha}$ and
  $\alpha=1/\beta$.
\end{theorem}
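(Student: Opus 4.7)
The plan is to follow the union-bound and Borel--Cantelli strategy of Theorem~\ref{theoRepetitionMinEntropy}, but to replace its pointwise min-entropy estimate by a Hölder--Jensen inequality that brings in $\tilde H^{cond}_\gamma(k)$.  First, by stationarity
$$P(L(X_1^n)\ge k)\le\sum_{d=1}^{n-k}(n-k+1-d)\,P_k(d),\qquad P_k(d):=P(X_1^k=X_{d+1}^{d+k}),$$
and I would split the sum into the non-overlap part ($d\ge k$) and the overlap part ($d<k$).  A preliminary observation is that the hypothesis, combined with $\tilde H^{cond}_\gamma(k)\le H_0(k)\le k\log\card\mathbb{X}$, forces $\alpha=1/\beta\ge 1$; consequently for $k_n:=A(\log n)^\alpha$ one has $N(k_n)=(\card\mathbb{X})^{k_n}\ge n$ for all sufficiently large $n$, so the finite past of length $N(k)$ appearing in the definition of $\tilde H^{cond}_\gamma(k)$ already covers the entire window $[1,n]$.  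This is precisely why $\tilde H^{cond}_\gamma$ (rather than $H^{cond}_\gamma$ with an infinite past) is the natural quantity to feature in the bound.

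For the non-overlap part $d\ge k$, I would apply the tower identity with $\mathcal{G}_d=\sigma(X_{d-N(k)+1}^d)$; because $k\le d\le n\le N(k)$, the random block $X_1^k$ is $\mathcal{G}_d$-measurable and
$$P_k(d)=\sred\left[P(X_{d+1}^{d+k}=X_1^k\mid\mathcal{G}_d)\right]\le\sred\left[\max_y P(X_{d+1}^{d+k}=y\mid\mathcal{G}_d)\right].$$
The trivial inequality $\max_y q(y)\le(\sum_y q(y)^\gamma)^{1/\gamma}$ together with Jensen applied to the concave map $x\mapsto x^{1/\gamma}$, and finally stationarity, yield $P_k(d)\le\exp(-\tfrac{\gamma-1}{\gamma}\tilde H^{cond}_\gamma(k))$.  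Summed over the non-overlap pairs this contributes a term of order $n^{2-\frac{\gamma-1}{\gamma}BA^\beta}$ to $P(L(X_1^n)\ge k_n)$.

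For the overlap part $d<k$, the equality $X_1^k=X_{d+1}^{d+k}$ already forces $X_1^d=X_{d+1}^{2d}$, so $P_k(d)\le P_d(d)$ and applying the non-overlap bound at length $d$ gives $P_k(d)\le\exp(-\tfrac{\gamma-1}{\gamma}\tilde H^{cond}_\gamma(d))\le\exp(-cBd^\beta)$ for $d$ past some fixed threshold.  Once the total bound on $P(L(X_1^n)\ge k_n)$ is shown to be summable whenever $BA^\beta>\gamma(\gamma+1)/(\gamma-1)$, the first Borel--Cantelli lemma completes the proof.

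I expect the main obstacle to be the overlap case: the naive bound $\exp(-cBd^\beta)$ combined with up to $n$ pairs per gap $d$ yields an $O(n)$ contribution that, taken at face value, defeats summability in $n$.  The precise coefficient $\gamma(\gamma+1)/(\gamma-1)$ in the threshold should emerge from combining this overlap estimate with the trivial truncation $P(\bigcup_i\cdot)\le 1$ at those $d$ where $nP_k(d)>1$, which confines the effective summation to $d\gtrsim(cB^{-1}\log n)^\alpha$ and, after balancing the resulting polynomial decay in $n$ against the non-overlap exponent $2-\tfrac{\gamma-1}{\gamma}BA^\beta$, produces exactly the stated $\gamma+1$ factor.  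Carrying out this balancing tightly is where most of the technical work will go.
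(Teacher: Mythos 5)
Your non-overlap argument is correct and is a genuinely different route from the paper's (which goes through recurrence times), but the overlap case is a real gap, and your proposed repair does not work. With only the estimate $P_k(d)\le P_d(d)\le\exp(-cBd^\beta)$, the set of gaps $d$ for which $nP_k(d)>1$ has cardinality of order $(\log n)^{1/\beta}$, and bounding each of the corresponding union events by $1$ leaves you with $P(L(X_1^n)\ge k_n)\lesssim(\log n)^{1/\beta}$ --- a quantity that does not even tend to zero, so the Borel--Cantelli lemma cannot be invoked. (Note also that the overlap gaps $d<k_n$ number only $A(\log n)^\alpha$ in total, so ``confining the effective summation'' to large $d$ gains nothing: the problem is the small-$d$ terms themselves.) The fix is to avoid the crude reduction $P_k(d)\le P_d(d)$ altogether: the event $\klam{X_1^k=X_{d+1}^{d+k}}$ coincides exactly with the event that $X_{d+1}^{d+k}$ equals the $d$-periodic extension of $X_1^d$, which is a $\sigma(X_{d-N(k)}^{d})$-measurable string; hence your tower/H\"older/Jensen argument applies verbatim for \emph{every} $1\le d\le n-k$ and gives the uniform bound $P_k(d)\le\exp\okra{-\tfrac{\gamma-1}{\gamma}\tilde H^{cond}_\gamma(k)}$, with no case split needed.

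Even after that repair, the union bound gives $P(L(X_1^n)\ge k_n)\le n^2\exp\okra{-\tfrac{\gamma-1}{\gamma}\tilde H^{cond}_\gamma(k_n)}$, which is summable when $BA^\beta>3\gamma/(\gamma-1)$. This matches the stated threshold $\gamma(\gamma+1)/(\gamma-1)$ only at $\gamma=2$; it is better for $\gamma>2$ but strictly worse for $1<\gamma<2$, so your route cannot yield the theorem as stated in that range --- the factor $\gamma+1$ will not ``emerge'' from balancing the overlap terms. The paper proceeds differently: it shows $P(L(X_1^n)\ge k)\le(n-k+1)^\gamma\,\sred R_k^{-\gamma+1}$ via a union over shifts of the recurrence time $R_k$ plus the Markov inequality (the prefactor $n^\gamma$ rather than $n^2$ is where $\gamma+1$ instead of $3$ comes from), and then bounds $\sred S_k^{-\gamma+1}$ for the trimmed recurrence time $S_k=\min\klam{R_k,N(k)}$ by optimizing the Kontoyiannis-type inequality $P\okra{S_k\le C/P(X_1^k|X_{-N(k)}^0)}\le C(1+k\log\card\mathbb{X})$ over $C$, which produces the exponent $\tfrac{\gamma-1}{\gamma}$; the overlap/non-overlap dichotomy never arises there. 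Your preliminary observation that the hypothesis forces $\beta\le1$ and hence $N(k_n)\ge n$ is correct and is indeed the reason a past of length $N(k)$ suffices in the conditioning.
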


Thus, the hyperlogarithmic growth of maximal repetition can be connected
to vanishing of the Shannon entropy rate and the conditional R\'enyi
entropy rate, as follows,
\begin{align}
  h=\lim_{n\rightarrow\infty} \frac{H(n)}{n}=0
  &\implies
  \liminf_{n\rightarrow\infty} \frac{L(X_1^n)}{(\log n)^\alpha}=\infty \text{ a.s.}
  ,\quad \alpha<1,
  \\
  \limsup_{n\rightarrow\infty} \frac{L(X_1^n)}{\log n}=\infty \text{ a.s.}
  &\implies
  \liminf_{n\rightarrow\infty} \frac{\tilde H^{cond}_\gamma(n)}{n}=0
  ,\quad \gamma>1.
\end{align}
As we have mentioned, the first implication was noticed in
\cite{Shields92b}. Theorem \ref{theoRepetitionShannon} supplements
this observation for the power-law growth of Shannon entropy.  The gap
between entropies $H(n)$ and $\tilde H^{cond}_\gamma(n)$ can be still
arbitrarily large. It remains an open question whether Theorems
\ref{theoRepetitionShannon} and \ref{theoRepetitionRenyi} can be
sharpened further. Can entropies $H(n)$ and $\tilde
H^{cond}_\gamma(n)$ be both replaced with the unconditional R\'enyi
entropy $H_\gamma(n)$ of any order $\gamma>1$?  We suppose that the answer
is negative but the counterexamples seem difficult to construct.

To conclude the introduction, a few words are due about the proofs of
our new results and their historical context. Both Theorems
\ref{theoRepetitionShannon} and \ref{theoRepetitionRenyi} can be
proved using the probabilistic upper and lower bounds for recurrence
times by Kontoyiannis \cite{Kontoyiannis98}. The recurrence times are
random distances between two occurrences of a particular string in the
realization of a stationary process.  Recurrence times are a classical
topic in ergodic theory and information theory. Their fundamental
links with probability and Shannon entropy rate have been established
in \cite{Kac47,WynerZiv89,OrnsteinWeiss93}. Less recognized are their
links with R\'enyi entropy \cite{Ko12}. Recently, recurrence times
have been also researched experimentally for natural language
\cite{AltmannPierrehumbertMotter09}. Additionally, we can supply an
alternative proof of Theorem \ref{theoRepetitionShannon} which applies
subword complexity and an inequality by D\k{e}bowski
\cite{Debowski16,Debowski16b}.  The subword complexity of a string is
a function which tells how many different substrings of a given length
appear in the string.  Subword complexity has been studied mostly from
a combinatorial perspective \cite{JansonLonardiSzpankowski04,
  Ferenczi99, DeLuca99, GheorghiciucWard07, Ivanko08}, whereas its
links with entropy have not been much researched.

The remaining parts of this paper are organized as follows.  In
Section \ref{secRenyi}, we prove Theorem \ref{theoRepetitionRenyi},
whereas in Section \ref{secShannon}, we demonstrate Theorem
\ref{theoRepetitionShannon}, whose discussion partly relies on the
discussion of Theorem \ref{theoRepetitionRenyi}.

\section{Proof of Theorem \ref{theoRepetitionRenyi}}
\label{secRenyi}

Our proof of Theorem \ref{theoRepetitionRenyi} applies the concept of
the recurrence time, which is a special case of the waiting time. The
waiting time $R(x_1^k)$ is a random variable equal to the first
position in the infinite random past
$X_{-\infty}^{-1+k}=(...,X_{-3+k},X_{-2+k},X_{-1+k})$ at which a copy
of a finite fixed string $x_1^k=(x_1,x_2,...,x_k)$ appears,
\begin{align}
  R(x_1^k):=\inf\klam{i\ge 1: X_{-i+1}^{-i+k}=x_{1}^{k}}
  .
\end{align}
A particular case of the waiting time is the recurrence time
$R_k:=R(X_1^k)$, where we plug in the random block $X_1^k$.  To bound
the maximal repetition with conditional R\'enyi entropy, we first link
the distribution of maximal repetition to the expectation of the
recurrence time.
\begin{lemma}
  \label{theoRecurrenceRepetitionProb}
  For a stationary process $(X_i)_{i=-\infty}^\infty$ over a countable
  alphabet, 
\begin{align}
  \label{MaxRepRecurrence}
  P(L(X_1^n)<k)&\le \frac{\sred \log R_k}{\log(n-k+1)},
  \\
  \label{MaxRepRecurrenceII}
  P(L(X_1^n)\ge k)&\le (n-k+1)^\gamma\sred R_k^{-\gamma+1}, \quad \gamma>1.
\end{align}
\end{lemma}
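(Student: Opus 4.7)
The plan is to reformulate the maximal-repetition events in terms of position-shifted backward recurrence times, and then apply Markov's inequality in two different ways. For each $j\in\klam{2,\ldots,n-k+1}$ set $T_j:=\inf\klam{i\ge 1: X_{j-i}^{j-i+k-1}=X_j^{j+k-1}}$; this is the first distance back from position $j$ at which the block $X_j^{j+k-1}$ recurs, and by stationarity $T_j$ is distributed as $R_k$. The block $X_j^{j+k-1}$ has an earlier occurrence inside $X_1^n$ exactly when $T_j\le j-1$, so
\begin{align*}
\klam{L(X_1^n)\ge k}=\bigcup_{j=2}^{n-k+1}\klam{T_j\le j-1}.
\end{align*}

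For inequality (\ref{MaxRepRecurrence}) I would keep only the single $j=n-k+1$ term of this decomposition, obtaining $\klam{L(X_1^n)<k}\subseteq\klam{T_{n-k+1}\ge n-k+1}$. Since $R_k\ge 1$ the random variable $\log R_k$ is nonnegative, so stationarity followed by Markov's inequality gives
\begin{align*}
P(L(X_1^n)<k)\le P(R_k\ge n-k+1)=P\okra{\log R_k\ge\log(n-k+1)}\le\frac{\sred\log R_k}{\log(n-k+1)}.
\end{align*}

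For inequality (\ref{MaxRepRecurrenceII}) I would union-bound over the decomposition, use stationarity to replace each $T_j$ by $R_k$ in distribution, and then apply Markov's inequality to the nonnegative random variable $R_k^{-(\gamma-1)}$ (well defined for $\gamma>1$ since $R_k\ge 1$):
\begin{align*}
P(L(X_1^n)\ge k)\le\sum_{j=2}^{n-k+1}P(R_k\le j-1)\le\sred R_k^{-(\gamma-1)}\sum_{j=2}^{n-k+1}(j-1)^{\gamma-1}.
\end{align*}
Bounding each of the $n-k$ summands by $(n-k+1)^{\gamma-1}$ then yields the stated $(n-k+1)^\gamma\sred R_k^{-(\gamma-1)}$.

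No deep obstacle is anticipated here; the whole argument is a reformulation followed by routine Markov-type inequalities. The only delicacy is keeping straight the difference between the waiting time $R$ defined in the paper, which looks back from position $1$ for a fixed pattern, and the position-shifted variables $T_j$ introduced above, which look back from position $j$; these are distinct random variables, but by stationarity every $T_j$ shares the distribution of $R_k$, so the two can be interchanged at the level of probabilities. A secondary point is matching the exponent inside the sum in the second step so that, once summed, the power of $n-k+1$ is exactly $\gamma$ as required.
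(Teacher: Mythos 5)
Your proof is correct and follows essentially the same route as the paper: the same decomposition of $\klam{L(X_1^n)\ge k}$ into a union of events for position-shifted recurrence times (the paper writes these as $R_k\circ T^{-i}$ for the shift $T$, you as $T_j$), followed by the same two applications of Markov's inequality to $\log R_k$ and to $R_k^{-(\gamma-1)}$. The only cosmetic difference is that the paper first bounds the union-bound sum by $(n-k+1)P(R_k\le n-k+1)$ and then applies Markov once, whereas you apply Markov termwise and then sum; the resulting bound is identical.
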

\begin{proof}
  Let $T$ be the shift operation, $X_i\circ T=X_{i+1}$. We have 
  \begin{align}
    (L(X_{-n+k-1}^k)\ge k)=\bigcup_{i=k}^{n} (R_k\circ T^{-i}\le n-i+1).
  \end{align}
  Hence by stationarity and the Markov inequality,
   \begin{align}
     P(L(X_1^n)<k)&\le P(R_k>n-k+1)\le \frac{\sred \log R_k}{\log(n-k+1)},
     \\
     P(L(X_1^n)\ge k)&\le (n-k+1)P(R_k\le n-k+1) \le 
     (n-k+1)^\gamma\sred R_k^{-\gamma+1}.
   \end{align}
\end{proof}

Now let us introduce trimmed waiting and recurrence times
\begin{align}
  S(x_1^k)&:=\min\klam{R(x_1^k),N(k)}\le R(x_1^k),
  \\
  S_k&:=\min\klam{R_k,N(k)}\le R_k,
\end{align}
where $N(k):=(\card\mathbb{X})^k$ and $\mathbb{X}$ is the alphabet of
$X_i$.  Subsequently, we have a bound for the distribution of the
trimmed recurrence time in terms of conditional probability. This
bound is inspired by a similar bound for the untrimmed recurrence time
$R_k$ given by Kontoyiannis \cite{Kontoyiannis98}. The result of
Kontoyiannis applied conditional probability given the infinite
past. Here we reduce this infinite past to a finite context. The proof
technique remains essentially the same.
\begin{lemma}[cf.\ \cite{Kontoyiannis98}]
  \label{theoProbRecurrence}
  For a process $(X_i)_{i=-\infty}^\infty$ over a finite alphabet
  $\mathbb{X}$, for any $C>0$, we have
\begin{align}
  \label{ProbRecurrence}
  P\okra{S_k\le \frac{C}{P(X_1^k|X_{-N(k)}^0)}}\le C(1+k\log\card\mathbb{X})
  .
\end{align}
\end{lemma}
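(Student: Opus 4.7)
The plan is to adapt Kontoyiannis's argument \cite{Kontoyiannis98} to a finite conditioning window. The structural observation enabling this reduction is that $S_k\le N(k)$ by construction, so only matches of $X_1^k$ at lags $i\in[1,N(k)]$ in the past can be relevant, and every such lag involves only positions in $[-N(k)+1,k-1]$. After conditioning on $X_{-N(k)}^0=x_{-N(k)}^0$ and on $X_1^k=x_1^k$, this makes $S_k$ a deterministic function $S(x_1^k)=\min\okra{R(x_1^k),N(k)}$ of the pair $(x_{-N(k)}^0,x_1^k)$.

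The heart of the proof is a combinatorial claim: for each fixed $x_{-N(k)}^0$ and each $s\in\klam{1,\dots,N(k)-1}$, the set $A_s:=\klam{x_1^k\in\mathbb{X}^k:S(x_1^k)=s}$ has at most one element. Indeed, $R(x_1^k)=s<N(k)$ requires $X_{-s+1}^{-s+k}=x_1^k$. When $s\ge k$, this reads $x_1^k=x_{-s+1}^{-s+k}$, which is uniquely determined by the conditioning. When $s<k$ the would-be occurrence straddles past and future, but iterating the equation $x_j=x_{j-s}$ for $j=1,\dots,k$ forces $x_1^k$ to be the (unique) periodic extension of $x_{-s+1}^0$ with period $s$. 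This straddling case is the only nonroutine ingredient.

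Writing $p(x_1^k):=P(X_1^k=x_1^k\,|\,X_{-N(k)}^0)$ for brevity, the conditional probability of the event $\klam{S_k\le C/P(X_1^k|X_{-N(k)}^0)}$ equals
\begin{align*}
  \sum_{x_1^k}p(x_1^k)\boole{S(x_1^k)\cdot p(x_1^k)\le C}
  \le \sum_{s=1}^{N(k)}\card A_s\cdot \min\okra{1,\tfrac{C}{s}}.
\end{align*}
The bound $\card A_s\le 1$ for $s<N(k)$ contributes at most $\sum_{s=1}^{N(k)-1}C/s\le C\okra{1+\ln N(k)}$, while the residual term at $s=N(k)$ is controlled by the trivial bound $\card A_{N(k)}\le\card\mathbb{X}^k=N(k)$ together with the weight $C/N(k)$, contributing an additional $O(C)$. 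Taking expectation over $X_{-N(k)}^0$ and using $\ln N(k)=k\ln\card\mathbb{X}$ yields a pointwise bound of the form $C\okra{1+k\log\card\mathbb{X}}$, up to sharpening of absolute constants.

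The main obstacle I anticipate is the combinatorial claim $\card A_s\le 1$ for the straddling lags $s<k$; once the periodicity observation is in place, the remainder is a direct summation mirroring the structure of Kontoyiannis's template \cite{Kontoyiannis98}.
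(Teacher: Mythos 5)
Your argument is correct and follows essentially the same route as the paper: both proofs reduce the conditional probability to $\sum_{x_1^k} C/S(x_1^k)$ (the paper via the conditional Markov inequality, you via a direct indicator bound), and both rest on the same key combinatorial fact that for each lag $i$ at most one string $x_1^k$ can satisfy $R(x_1^k)=i$, turning the sum into a harmonic series of length $N(k)$. The only difference is your slightly lossier constant $C(2+k\log\card\mathbb{X})$; the paper recovers $C(1+k\log\card\mathbb{X})$ by observing that there are only $N(k)$ strings in total, so the sorted values of $S(x_1^k)$ dominate $1,2,\dots,N(k)$ and the whole sum is at most $\sum_{i=1}^{N(k)}C/i$, with no separate term needed for $s=N(k)$.
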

\begin{proof} 
By the conditional Markov inequality, we have 
\begin{align}
  P\okra{S_k\le \frac{C}{P(X_1^k|X_{-N(k)}^0)}}
  &\le 
  \sred \sred\okra{\frac{C}{S_kP(X_1^k|X_{-N(k)}^0)}\middle|
    X_{-N(k)}^0}
  \nonumber\\
  &=
  \sred\sum_{x_1^k}\frac{C}{S(x_1^k)} 
  .
\end{align}
But for each $i\ge 1$ there is at most one string $x_1^k$ such that
$R(x_1^k)=i$, so we have a uniform almost sure bound
\begin{align}
  \sum_{x_1^k}\frac{C}{S(x_1^k)} \le
  \sum_{i=1}^{N(k)}\frac{C}{i}
  \le
  C\okra{1+\int_{1}^{N(k)}\frac{1}{u}du}
  \le C(1+k\log\card\mathbb{X})
  .
\end{align}
\end{proof}

Having demonstrated the above two lemmas, we are in a position to
prove Theorem \ref{theoRepetitionRenyi}. For $\gamma>1$, assume
$\tilde H^{cond}_\gamma(k) \ge Bk^{\beta}$ for sufficiently large
$k$. Observe that for $0\le X\le 1$ and $Y\ge 0$, we have $\sred X\le
\sred Y+P(X\ge Y)$.  Specializing this to $0\le S_k^{-1}\le 1$ and
$P(X_1^k|X_{-N(k)}^0)\ge 0$, by (\ref{ProbRecurrence}) we obtain
\begin{align}
  \sred R_k^{-\gamma+1}&\le \sred S_k^{-\gamma+1}
  \nonumber\\
  &\le \min_{C>0}\kwad{C^{-\gamma+1}\sred
    \kwad{P(X_1^k|X_{-N(k)}^0)}^{-\gamma+1} + P\okra{S_k\le
      \frac{C}{P(X_1^k|X_{-N(k)}^0)}}}
  \nonumber\\
  &\le \min_{C>0}\kwad{C^{-\gamma+1}\exp\okra{-(\gamma-1)\tilde
      H^{cond}_\gamma(k)}+Ck(\log\card\mathbb{X}+1)}
  \nonumber\\
  &\le 2\kwad{(\log\card\mathbb{X}+1)k\exp\okra{-\tilde
      H^{cond}_\gamma(k)}}^{\frac{\gamma-1}{\gamma}} .
\end{align}
Hence by (\ref{MaxRepRecurrenceII}), we obtain for sufficiently large
$k$ that
\begin{align}
  P(L(X_1^n)\ge k)&\le
  2n^\gamma\kwad{(\log\card\mathbb{X}+1)k\exp\okra{-\tilde
      H^{cond}_\gamma(k)}}^{\frac{\gamma-1}{\gamma}} .
\end{align}
Let us take $k_n=A(\log n)^{1/\beta}$. Then
\begin{align}
    \sum_{n=1}^\infty P(L(X_1^n)\ge k_n)\le C+ 2\sum_{n=1}^\infty
    n^\gamma \kwad{(\log\card\mathbb{X}+1)A(\log n)^{1/\beta}n^{-BA^\beta}}^{\frac{\gamma-1}{\gamma}},
  \end{align}
  which is finite for
  $A>\kwad{\gamma\cdot\frac{\gamma+1}{\gamma-1}}^{1/\beta}B^{-1/\beta}$.
  Hence by the Borel-Cantelli lemma, we have $L(X_{1}^{n})< A(\log
  n)^{1/\beta}$ for all but finitely many $n$ almost surely. In this
  way we have proved Theorem \ref{theoRepetitionRenyi}.

\section{Two proofs of Theorem \ref{theoRepetitionShannon}}
\label{secShannon}

We will present two proofs of Theorem \ref{theoRepetitionShannon}. The
first one uses inequality (\ref{MaxRepRecurrence}) and the second
bound for the recurrence time by Kontoyiannis \cite{Kontoyiannis98}.
\begin{lemma}[cf.\ \cite{Kontoyiannis98}]
  \label{theoProbRecurrenceII}
  For a stationary process $(X_i)_{i=-\infty}^\infty$ over a countable
  alphabet, for any $C>0$, we have
\begin{align}
  \label{ProbRecurrenceII}
  P\okra{R_k\ge \frac{C}{P(X_1^k)}}\le C^{-1}
  .
\end{align}
\end{lemma}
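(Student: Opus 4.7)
The plan is to reduce the claim to a Kac-type moment bound via Markov's inequality, mirroring the reduction used in the proof of Lemma \ref{theoProbRecurrence}. Since the event $\okra{R_k\ge C/P(X_1^k)}$ coincides with $\okra{R_k P(X_1^k)\ge C}$, Markov gives
\begin{align*}
P\okra{R_k\ge \frac{C}{P(X_1^k)}}\le \frac{\sred\kwad{R_k P(X_1^k)}}{C},
\end{align*}
so the lemma reduces to proving $\sred\kwad{R_k P(X_1^k)}\le 1$. Decomposing on the value of $X_1^k$, this in turn will follow from the string-wise estimate
\begin{align*}
\sred\kwad{R(x_1^k)\boole{X_1^k=x_1^k}}\le 1
\end{align*}
valid for every admissible string $x_1^k$, which is essentially a form of Kac's lemma that does not require ergodicity.

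To establish the string-wise estimate I would expand
\begin{align*}
\sred\kwad{R(x_1^k)\boole{X_1^k=x_1^k}}=\sum_{i\ge 1}P\okra{X_1^k=x_1^k,\;R(x_1^k)\ge i}
\end{align*}
and apply the shift $T^{i-1}$ to each summand. The event $\klam{R(x_1^k)\ge i}$ forbids the appearance of $x_1^k$ at the backward starting positions $0,-1,\ldots,-i+2$; translating by $i-1$ turns this into the forward non-occurrence constraint $X_m^{m+k-1}\neq x_1^k$ for $1\le m\le i-1$, while the accompanying event $\klam{X_1^k=x_1^k}$ migrates to $\klam{X_i^{i+k-1}=x_1^k}$. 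Hence each summand equals the probability that the smallest $m\ge 1$ satisfying $X_m^{m+k-1}=x_1^k$ is exactly $i$. These events are pairwise disjoint across $i\ge 1$, so their probabilities sum to at most $P\okra{\exists\,m\ge 1:\;X_m^{m+k-1}=x_1^k}\le 1$.

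Summing the string-wise bound weighted by $P(x_1^k)$ recovers $\sred\kwad{R_k P(X_1^k)}\le 1$, and the Markov step then closes the argument. I expect the only subtle point to be the index bookkeeping in the stationarity shift---specifically, verifying that translation by $i-1$ maps the backward non-occurrence window onto a forward one with the correct endpoints $1,\ldots,i-1$. Once that transformation is set up cleanly, the disjointness observation makes the bound immediate with no combinatorial overhead, and no additional hypotheses on the process (such as ergodicity or finiteness of the alphabet) are required.
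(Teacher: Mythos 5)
Your proof is correct and follows the same overall route as the paper's --- a Markov-inequality reduction combined with a Kac-type bound on the conditional expectation of $R_k$ given $X_1^k$ --- but it differs in one substantive way: the paper simply cites Kac's theorem in the equality form $\sred(R_k|X_1^k)=1/P(X_1^k)$ and applies the conditional Markov inequality, whereas you prove the needed half of Kac's lemma from scratch. Your string-wise estimate $\sred\kwad{R(x_1^k)\boole{X_1^k=x_1^k}}\le 1$, obtained by writing the expectation as $\sum_{i\ge 1}P\okra{X_1^k=x_1^k,\;R(x_1^k)\ge i}$, shifting each summand by $T^{i-1}$ to turn it into the probability that the first forward occurrence of $x_1^k$ is at position $i$, and using disjointness of these events, is exactly the standard proof of the ``easy'' inequality direction of Kac's lemma. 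This buys you two things: the argument is self-contained, and it is valid for an arbitrary stationary process, since the inequality direction needs no ergodicity --- whereas the equality form of Kac's theorem that the paper invokes does require ergodicity (or at least that the shifts of the cylinder set sweep out the space), a hypothesis not present in the lemma's statement. Only the inequality $\sred(R_k|X_1^k)\le 1/P(X_1^k)$ is actually used in the paper's proof, so both arguments establish the lemma as stated; yours just makes the logical dependence explicit. Your index bookkeeping in the shift (backward windows starting at $0,-1,\dots,-i+2$ mapping to forward windows starting at $1,\dots,i-1$) is consistent with the paper's definition $R(x_1^k)=\inf\klam{i\ge 1:X_{-i+1}^{-i+k}=x_1^k}$, and the passage from the event $\klam{R_k\ge C/P(X_1^k)}$ to $\klam{R_kP(X_1^k)\ge C}$ is legitimate since $P(X_1^k)>0$ almost surely over a countable alphabet.
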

\begin{proof} 
  Recalling the celebrated Kac theorem,
\begin{align}
  \label{Kac}
  \sred(R_k|X_1^k)=\frac{1}{P(X_1^k)},
\end{align}
cf.\ \cite{Kac47}, we obtain by the conditional Markov inequality
\begin{align}
  P\okra{R_k\ge \frac{C}{P(X_1^k)}}
  &=
  \sred P\okra{R_k\ge \frac{C}{P(X_1^k)}\middle|X_1^k}
  \nonumber\\
  &\le \sred\kwad{\frac{P(X_1^k)\sred(R_k|X_1^k)}{C}}=C^{-1}
  .
\end{align}
\end{proof}

Observe that $\sred X\le \int_0^\infty P(X\ge p)dp$. Hence,
applying Lemma \ref{theoProbRecurrenceII} to inequality
(\ref{MaxRepRecurrence}), we obtain
\begin{align}
  P(L(X_1^n)<k)&\le \frac{\sred \log R_k}{\log(n-k+1)}
  \nonumber\\
  &= \frac{\sred\kwad{-\log P(X_1^k)}+\sred\kwad{\log R_k
      P(X_1^k)}}{\log(n-k+1)}
  \nonumber\\
  &\le \frac{H(k)+\int_0^\infty P\okra{R_kP(X_1^k)\ge
      e^{p}}dp}{\log(n-k+1)}
  \nonumber\\
  &\le \frac{H(k)+\int_0^\infty e^{-p}dp}{\log(n-k+1)} =
  \frac{H(k)+1}{\log(n-k+1)} .
  \label{MaxRepRecurrenceHI}
\end{align}

Assume $H(k)\le Bk^{\beta}$ for sufficiently large $k$. For an $l>0$
and an $\epsilon>0$, let us take $k_l=2^l$, $m=l^{1+\epsilon}$, and
$n_l=\exp(k_l^{\beta+\epsilon})$. By inequality
(\ref{MaxRepRecurrenceHI}) for $k=k_l$, $m=m_l$, and $n=n_l$, we obtain
  \begin{align}
    \sum_{l=1}^\infty P(L(X_1^{n_l})<k_l)\le C+ \sum_{l=1}^\infty
    \frac{B2^{l\beta}+1}{\log(\exp(2^{l(\beta+\epsilon)})-2^l+1)}
    <\infty .
  \end{align}
  Hence by the Borel-Cantelli lemma, we have $L(X_1^n)\ge k_l$ for all
  but finitely many $l$ almost surely. Now for an arbitrary $n>0$, let
  us consider the maximal $l$ such that $n_l\le n$. We have
  $L(X_1^n)\ge L(X_{1}^{n_l})$ since $X_{1}^{n_l}$ is a substring of
  $X_{1}^{n}$, whereas
  \begin{align}
    k_{l}=\frac{1}{2}k_{l+1}=\frac{1}{2}(\log
    n_{l+1})^{1/(\beta+\epsilon)}\ge \frac{1}{2}(\log
    n)^{1/(\beta+\epsilon)}
    .
  \end{align}
  Hence $L(X_{1}^{n})\ge \frac{1}{2}(\log n)^{1/(\beta+\epsilon)}$
  holds for all but finitely many $n$ almost surely.  This completes
  the first proof of Theorem \ref{theoRepetitionShannon}.

The second proof of Theorem \ref{theoRepetitionShannon} will make make
use of another concept, namely, the notion of subword complexity.
Subword complexity $f(k|x_1^n)$ is a function which counts how many
distinct substrings of length $k$ appear in a string $x_1^n$,
\begin{align}
  f(k|x_1^n):=\card\klam{y_1^k: x_{i+1}^{i+k}=y_1^k \text{ for some } 0\le
    i\le n-k}.
\end{align}
To bound the maximal repetition in terms of Shannon entropy, we first
relate the distribution of maximal repetition to the expected subword
complexity. The following proposition strengthens Lemma
\ref{theoRecurrenceRepetitionProb}.
\begin{lemma}
  \label{theoTopEntropyRepetitionProb}
  We have
\begin{align}
  \label{MaxRepSubwordComp}
  P(L(X_1^n)<k)&\le \frac{\sred f(k|X_1^n)}{n-k+1},
  \\
  \label{MaxRepSubwordCompII}
  P(L(X_1^n)\ge k)&\le n-k+1-\sred f(k|X_1^n).
\end{align}
\end{lemma}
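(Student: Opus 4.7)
The plan is to reduce both bounds to Markov's inequality by exploiting a direct combinatorial identity linking maximal repetition to subword complexity. First I would observe that in the string $X_1^n$ there are exactly $n-k+1$ positions where a substring of length $k$ can start, namely positions $1,2,\ldots,n-k+1$, so the number of distinct length-$k$ subwords satisfies $f(k|X_1^n)\le n-k+1$ deterministically.

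The key identity is then: $L(X_1^n)<k$ if and only if no two of these $n-k+1$ starting positions produce the same length-$k$ substring, which is the same as saying all of them are distinct, i.e. $f(k|X_1^n)=n-k+1$. Equivalently, $L(X_1^n)\ge k$ if and only if $n-k+1-f(k|X_1^n)\ge 1$. This identity is the whole content of the lemma; everything else is routine.

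Given this identity, inequality (\ref{MaxRepSubwordComp}) follows immediately from Markov's inequality applied to the nonnegative random variable $f(k|X_1^n)$: since $L(X_1^n)<k$ is the same event as $f(k|X_1^n)\ge n-k+1$, we get
\begin{align*}
P(L(X_1^n)<k)=P\okra{f(k|X_1^n)\ge n-k+1}\le \frac{\sred f(k|X_1^n)}{n-k+1}.
\end{align*}
Similarly, inequality (\ref{MaxRepSubwordCompII}) follows from Markov's inequality applied to the nonnegative integer-valued random variable $n-k+1-f(k|X_1^n)$, whose value is at least $1$ precisely on the event $L(X_1^n)\ge k$:
\begin{align*}
P(L(X_1^n)\ge k)=P\okra{n-k+1-f(k|X_1^n)\ge 1}\le n-k+1-\sred f(k|X_1^n).
\end{align*}

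There is no genuine obstacle here; the only thing to be careful about is the exact accounting of starting positions (positions $i+1$ for $i=0,1,\ldots,n-k$, giving $n-k+1$ of them) so that the constant $n-k+1$ appearing in both the maximum value of $f(k|X_1^n)$ and on the right-hand sides of the two bounds match up correctly.
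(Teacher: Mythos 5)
Your proof is correct and follows essentially the same route as the paper: the core observation in both is that $L(X_1^n)<k$ holds exactly when all $n-k+1$ length-$k$ substrings are distinct, i.e.\ $f(k|X_1^n)=n-k+1$, while $L(X_1^n)\ge k$ forces $f(k|X_1^n)\le n-k$. Whether one then invokes Markov's inequality or, as the paper does, splits $\sred f(k|X_1^n)$ over the two events is only a cosmetic difference.
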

\begin{proof}
   We have $f(k|X_1^n)=n-k+1$ if $L(X_1^n)<k$ and $f(k|X_1^n)\le n-k$
   if $L(X_1^n)\ge k$. Hence
   \begin{align}
     \sred f(k|X_1^n) &\ge (n-k+1)P(L(X_1^n)<k),
     \\
     \sred f(k|X_1^n) &\le
     (n-k+1)P(L(X_1^n)<k)+(n-k)P(L(X_1^n)\ge k),
   \end{align}
   from which the claims follow.
\end{proof}

Subsequently, we have a bound for the expected subword complexity in
terms of Shannon entropy. The following Lemma \ref{theoMaxRepEntropy}
is a variation of the results in
\cite{Debowski16,Debowski16b}. Precisely, in \cite{Debowski16} we have
established inequality (\ref{SubwordCompSigma}), whereas in
\cite{Debowski16b} we have given a bound similar to
(\ref{MaxRepEntropy}) but for the number of nonoverlapping blocks
rather than the overlapping ones.
\begin{lemma}[cf.\ \cite{Debowski16,Debowski16b}]
  \label{theoMaxRepEntropy}
  For a stationary process $(X_i)_{i=-\infty}^\infty$ over a countable
  alphabet, for any $m\ge 1$,
  \begin{align}
    \label{MaxRepEntropy}
     \frac{\sred f(k|X_1^n)}{n-k+1}\le \frac{1}{m} + 
     \frac{\exp\okra{mH(k)}}{n-k+1}
  .
  \end{align}
\end{lemma}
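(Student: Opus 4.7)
The plan is to split the sum defining the expected subword complexity into contributions from a high-probability ``typical'' set $A$ of $k$-blocks and its complement, where $A$ is engineered so that $|A|\le \exp(mH(k))$ while $P(X_1^k\notin A)\le 1/m$. This matches the two terms on the right-hand side of (\ref{MaxRepEntropy}).

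First I would rewrite
$$f(k|X_1^n) \;=\; \sum_{y_1^k}\boole{N(y_1^k)\ge 1},\qquad N(y_1^k):=\card\klam{0\le i\le n-k: X_{i+1}^{i+k}=y_1^k},$$
and estimate the expectation of each indicator in two complementary ways: trivially by $1$, and by $\sred N(y_1^k)=(n-k+1)P(X_1^k=y_1^k)$ using stationarity and the union bound. Applying the first estimate to blocks in some set $A$ and the second to blocks outside $A$ gives, for any $A$,
$$\sred f(k|X_1^n)\;\le\;|A|\;+\;(n-k+1)\,P(X_1^k\notin A).$$

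The key step is the choice
$$A\;:=\;\klam{y_1^k:P(X_1^k=y_1^k)\ge \exp(-mH(k))}.$$
The cardinality bound $|A|\le \exp(mH(k))$ is immediate, since the probabilities of the elements of $A$ sum to at most $1$ and each is at least $\exp(-mH(k))$. Markov's inequality applied to the nonnegative self-information $-\log P(X_1^k)$, whose expectation is $H(k)$, yields
$$P(X_1^k\notin A)\;=\;P\okra{-\log P(X_1^k)>mH(k)}\;\le\;\frac{H(k)}{mH(k)}\;=\;\frac{1}{m}.$$
Inserting these two bounds into the previous display and dividing by $n-k+1$ produces (\ref{MaxRepEntropy}).

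I foresee no serious obstacle in this argument; the only minor point to check is the degenerate case $H(k)=0$, in which $X_1^k$ is almost surely constant, $A$ is a singleton, and the inequality $\sred f(k|X_1^n)\le 1+(n-k+1)/m$ holds by direct inspection. The infinite case $H(k)=\infty$ makes the right-hand side trivially infinite, so it can be ignored.
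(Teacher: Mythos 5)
Your argument is correct and is essentially the paper's own proof in different notation: the paper also bounds each indicator by $\min\kwad{1,(n-k+1)P(X_1^k=w)}$ and then splits according to whether $-\log P(X_1^k)$ exceeds $mH(k)$, applying Markov's inequality to the self-information to get the $1/m$ term and a counting bound $\exp(mH(k))$ for the remaining blocks. The only cosmetic difference is that the paper packages the split via the function $\sigma(y)=\min\kwad{\exp(y),1}$ and the intermediate inequality (\ref{SubwordCompSigma}) rather than an explicit typical set $A$.
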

\begin{proof}
  We will use the identity
\begin{align}
  f(k|X_1^n)=\sum_{w\in\mathbb{X}^k}\boole{\sum_{i=0}^{n-k}
    \boole{X_{i+1}^{i+k}=w}\ge 1}.
\end{align}
Hence by the Markov inequality,
\begin{align}
  \sred f(k|X_1^n)
  &= \sum_{w\in\mathbb{X}^k} P\okra{\sum_{i=0}^{n-k}
    \boole{X_{i+1}^{i+k}=w}\ge 1}
  \nonumber\\
  &\le \sum_{w\in\mathbb{X}^k} \min\kwad{1,\sred\okra{\sum_{i=0}^{n-k}
    \boole{X_{i+1}^{i+k}=w}}}
  \nonumber\\
  &= \sum_{w\in\mathbb{X}^k}
  \min\kwad{1,(n-k+1)P(X_1^k=w)}
  \nonumber\\
  &=
  (n-k+1)\sred\okra{\min\okra{\kwad{(n-k+1)P(X_1^k)}^{-1},1}}
  .
\end{align}

Denoting $\sigma(y)=\min\kwad{\exp(y),1}$, we obtain
\begin{align}
  \label{SubwordCompSigma}
  \frac{\sred f(k|X_1^n)}{n-k+1}\le\sred\sigma\okra{-\log
    P(X_1^k)-\log(n-k+1)}
  ,
\end{align}
where $\sred\kwad{-\log P(X_1^k)}=H(k)$. Therefore, using the
Markov inequality
\begin{align}
  P\okra{-\log P(X_1^k)\ge mH(k)}\le
  \frac{1}{m}
\end{align}
for $m\ge 1$, we further obtain from (\ref{SubwordCompSigma}) that
\begin{align}
  \label{SubwordCompEntropy}
  \frac{\sred f(k|X_1^n)}{n-k+1}\le \frac{1}{m} + 
  \frac{\exp\okra{mH(k)}}{n-k+1}
  .
\end{align}
Inserting (\ref{SubwordCompEntropy}) into (\ref{MaxRepSubwordComp})
yields the requested bound.
\end{proof}

The above two lemmas will be used now to demonstrate Theorem
\ref{theoRepetitionShannon}.  Chaining inequalities
(\ref{MaxRepSubwordComp}) and (\ref{MaxRepEntropy}), we obtain
inequality
\begin{align}
  \label{MaxRepRecurrenceHII}
  P(L(X_1^n)<k)&\le \frac{1}{m} + 
     \frac{\exp\okra{mH(k)}}{n-k+1},
\end{align}
which resembles inequality (\ref{MaxRepRecurrenceHI}). The sequel is
essentially the same.  Assume $H(k)\le Bk^{\beta}$ for sufficiently
large $k$. For an $l>0$ and an $\epsilon>0$, let us take $k_l=2^l$,
$m=l^{1+\epsilon}$, and $n_l=\exp(k_l^{\beta+\epsilon})$. By
inequality (\ref{MaxRepRecurrenceHII}) for $k=k_l$, $m=m_l$, and
$n=n_l$, we obtain
  \begin{align}
    \sum_{l=1}^\infty P(L(X_1^{n_l})<k_l)\le C+ \sum_{l=1}^\infty
    \kwad{\frac{1}{l^{1+\epsilon}} +
      \frac{\exp(l^{1+\epsilon}B2^{l\beta})}{\exp(2^{l(\beta+\epsilon)})-2^l+1}}
    <\infty .
  \end{align}
  Hence by the Borel-Cantelli lemma, we have $L(X_1^n)\ge k_l$ for all
  but finitely many $l$ almost surely. Now for an arbitrary $n>0$, let
  us consider the maximal $l$ such that $n_l\le n$. We have
  $L(X_1^n)\ge L(X_{1}^{n_l})$ since $X_{1}^{n_l}$ is a substring of
  $X_{1}^{n}$, whereas
  \begin{align}
    k_{l}=\frac{1}{2}k_{l+1}=\frac{1}{2}(\log
    n_{l+1})^{1/(\beta+\epsilon)}\ge \frac{1}{2}(\log
    n)^{1/(\beta+\epsilon)}
    .
  \end{align}
  Hence $L(X_{1}^{n})\ge \frac{1}{2}(\log n)^{1/(\beta+\epsilon)}$
  holds for all but finitely many $n$ almost surely.  This completes
  the second proof of Theorem \ref{theoRepetitionShannon}.

\section*{Acknowledgment}

The author wishes to thank Jan Mielniczuk, Pawe{\l} Teisseyre, Ioannis
Kontoyiannis, and anonymous reviewers for very helpful comments.

\bibliographystyle{IEEEtran}

\bibliography{0-journals-abbrv,0-publishers-abbrv,ai,mine,tcs,ql,nlp,books}

\begin{thebibliography}{10}
\providecommand{\url}[1]{#1}
\csname url@rmstyle\endcsname
\providecommand{\newblock}{\relax}
\providecommand{\bibinfo}[2]{#2}
\providecommand\BIBentrySTDinterwordspacing{\spaceskip=0pt\relax}
\providecommand\BIBentryALTinterwordstretchfactor{4}
\providecommand\BIBentryALTinterwordspacing{\spaceskip=\fontdimen2\font plus
\BIBentryALTinterwordstretchfactor\fontdimen3\font minus
  \fontdimen4\font\relax}
\providecommand\BIBforeignlanguage[2]{{%
\expandafter\ifx\csname l@#1\endcsname\relax
\typeout{** WARNING: IEEEtran.bst: No hyphenation pattern has been}%
\typeout{** loaded for the language `#1'. Using the pattern for}%
\typeout{** the default language instead.}%
\else
\language=\csname l@#1\endcsname
\fi
#2}}

\bibitem{DeLuca99}
A.~de~Luca, ``On the combinatorics of finite words,'' \emph{Theor.\ Comput.\
  Sci.}, vol. 218, pp. 13--39, 1999.

\bibitem{KolpakovKucherov99a}
R.~Kolpakov and G.~Kucherov, ``Finding maximal repetitions in a word in linear
  time,'' in \emph{40th Annual Symposium on Foundations of Computer Science,
  1999}, 1999, pp. 596--604.

\bibitem{KolpakovKucherov99}
------, ``On maximal repetitions in words,'' \emph{J.\ Discr.\ Algor.}, vol.~1,
  pp. 159--186, 1999.

\bibitem{CrochemoreIlie08}
M.~Crochemore and L.~Ilie, ``Maximal repetitions in strings,'' \emph{J.\ Comp.\
  Syst.\ Sci.}, vol.~74, pp. 796--807, 2008.

\bibitem{ErdosRenyi70}
P.~Erd\H{o}s and A.~R\'enyi, ``On a new law of large numbers,'' \emph{J.\
  Anal.\ Math.}, vol.~22, pp. 103--111, 1970.

\bibitem{ArratiaWaterman89}
R.~Arratia and M.~S. Waterman, ``The {Erd\"os-R\'enyi} strong law for pattern
  matching with a given proportion of mismatches,'' \emph{Ann.\ Probab.},
  vol.~17, pp. 1152--1169, 1989.

\bibitem{Shields92b}
P.~C. Shields, ``String matching: The ergodic case,'' \emph{Ann.\ Probab.},
  vol.~20, pp. 1199--1203, 1992.

\bibitem{Shields97}
------, ``String matching bounds via coding,'' \emph{Ann.\ Probab.}, vol.~25,
  pp. 329--336, 1997.

\bibitem{KontoyiannisSuhov94}
I.~Kontoyiannis and Y.~Suhov, ``Prefixes and the entropy rate for long-range
  sources,'' in \emph{Probability, Statistics, and Optimization: A Tribute to
  Peter Whittle}, F.~P. Kelly, Ed.\hskip 1em plus 0.5em minus 0.4em\relax
  Wiley, 1994, pp. 89--98.

\bibitem{Debowski11b}
{\L}.~D\k{e}bowski, ``On the vocabulary of grammar-based codes and the logical
  consistency of texts,'' \emph{IEEE Trans.\ Inform.\ Theory}, vol.~57, pp.
  4589--4599, 2011.

\bibitem{Debowski15g}
------, ``Regular {Hilberg} processes: An example of processes with a vanishing
  entropy rate,'' 2015, \url{http://arxiv.org/abs/1508.06158}.

\bibitem{Debowski15f}
------, ``Maximal repetitions in written texts: Finite energy hypothesis vs.
  strong {Hilberg} conjecture,'' \emph{Entropy}, vol.~17, pp. 5903--5919, 2015.

\bibitem{OrnsteinWeiss93}
D.~S. Ornstein and B.~Weiss, ``Entropy and data compression schemes,''
  \emph{IEEE Trans.\ Inform.\ Theory}, vol.~39, pp. 78--83, 1993.

\bibitem{Szpankowski93}
W.~Szpankowski, ``Asymptotic properties of data compression and suffix trees,''
  \emph{IEEE Trans.\ Inform.\ Theory}, vol.~39, pp. 1647--1659, 1993.

\bibitem{KontoyiannisOthers98}
I.~Kontoyiannis, P.~H. Algoet, Y.~M. Suhov, and A.~J. Wyner, ``Nonparametric
  entropy estimation for stationary processes and random fields, with
  applications to {English} text,'' \emph{IEEE Trans.\ Inform.\ Theory},
  vol.~44, pp. 1319--1327, 1998.

\bibitem{GaoKontoyiannisBienenstock08}
Y.~Gao, I.~Kontoyiannis, and E.~Bienenstock, ``Estimating the entropy of binary
  time series: Methodology, some theory and a simulation study,''
  \emph{Entropy}, vol.~10, pp. 71--99, 2008.

\bibitem{Shannon51}
C.~Shannon, ``Prediction and entropy of printed {English},'' \emph{Bell Syst.\
  Tech.\ J.}, vol.~30, pp. 50--64, 1951.

\bibitem{CoverKing78}
T.~M. Cover and R.~C. King, ``A convergent gambling estimate of the entropy of
  {English},'' \emph{IEEE Trans.\ Inform.\ Theory}, vol.~24, pp. 413--421,
  1978.

\bibitem{BrownOthers83}
P.~F. Brown, S.~D. Pietra, V.~J.~D. Pietra, J.~C. Lai, and R.~L. Mercer, ``An
  estimate of an upper bound for the entropy of {English},'' \emph{Comput.\
  Linguist.}, vol.~18, no.~1, pp. 31--40, 1983.

\bibitem{Grassberger02}
P.~Grassberger, ``Data compression and entropy estimates by non-sequential
  recursive pair substitution,'' 2002,
  \url{http://xxx.lanl.gov/abs/physics/0207023}.

\bibitem{BehrOthers03}
F.~Behr, V.~Fossum, M.~Mitzenmacher, and D.~Xiao, ``Estimating and comparing
  entropy across written natural languages using {PPM} compression,'' in
  \emph{Proceedings of Data Compression Conference 2003}, 2003, p. 416.

\bibitem{TakahiraOthers16}
R.~Takahira, K.~Tanaka-Ishii, and {\L}.~D\k{e}bowski, ``Entropy rate estimates
  for natural language---a new extrapolation of compressed large-scale
  corpora,'' \emph{Entropy}, vol.~18, no.~10, p. 364, 2016.

\bibitem{Hilberg90}
W.~Hilberg, ``{Der bekannte Grenzwert der redundanzfreien Information in Texten
  --- eine Fehlinterpretation der Shannonschen Experimente?}'' \emph{Frequenz},
  vol.~44, pp. 243--248, 1990.

\bibitem{Gray90}
R.~M. Gray, \emph{Entropy and Information Theory}.\hskip 1em plus 0.5em minus
  0.4em\relax Springer, 1990.

\bibitem{Renyi61}
A.~R\'enyi, ``On measures of entropy and information,'' in \emph{Proceedings of
  the Fourth Berkeley Symposium on Mathematical Statistics and Probability,
  Volume 1: Contributions to the Theory of Statistics}, 1961, pp. 547--561.

\bibitem{Arikan96}
E.~Arikan, ``An inequality on guessing and its application to sequential
  decoding,'' \emph{IEEE Trans.\ Inform.\ Theory}, vol.~42, pp. 99--105, 1996.

\bibitem{Berens13}
S.~Berens, ``Conditional {R\'enyi} entropy,'' Master's thesis, Leiden
  University, 2013.

\bibitem{Ko12}
M.~H.~F. Ko, ``Renyi entropy and recurrence,'' Ph.D. dissertation, University
  of Southern California, 2012.

\bibitem{Jelinek97}
F.~Jelinek, \emph{Statistical Methods for Speech Recognition}.\hskip 1em plus
  0.5em minus 0.4em\relax The MIT Press, 1997.

\bibitem{ManningSchutze99}
C.~D. Manning and H.~Sch\"utze, \emph{Foundations of Statistical Natural
  Language Processing}.\hskip 1em plus 0.5em minus 0.4em\relax The MIT Press,
  1999.

\bibitem{Rosenfeld00}
R.~Rosenfeld, ``Two decades of statistical language modeling. {Where} do we go
  from here?'' \emph{Proc.\ IEEE}, vol.~88, pp. 1270--1278, 2000.

\bibitem{LinTegmark17}
H.~W. Lin and M.~Tegmark, ``Critical behavior in physics and probabilistic
  formal languages,'' \emph{Entropy}, vol.~19, p. 299, 2017.

\bibitem{Kontoyiannis98}
I.~Kontoyiannis, ``Asymptotic recurrence and waiting times for stationary
  processes,'' \emph{J.\ Theor.\ Probab.}, vol.~11, pp. 795--811, 1998.

\bibitem{Kac47}
M.~Kac, ``On the notion of recurrence in discrete stochastic processes,''
  \emph{Bull.\ Amer.\ Math.\ Soc.}, vol.~53, pp. 1002--1010, 1947.

\bibitem{WynerZiv89}
A.~D. Wyner and J.~Ziv, ``Some asymptotic properties of entropy of a stationary
  ergodic data source with applications to data compression,'' \emph{IEEE
  Trans.\ Inform.\ Theory}, vol.~35, pp. 1250--1258, 1989.

\bibitem{AltmannPierrehumbertMotter09}
E.~G. Altmann, J.~B. Pierrehumbert, and A.~E. Motter, ``Beyond word frequency:
  Bursts, lulls, and scaling in the temporal distributions of words,''
  \emph{PLoS ONE}, vol.~4, p. e7678, 2009.

\bibitem{Debowski16}
{\L}.~D\k{e}bowski, ``Estimation of entropy from subword complexity,'' in
  \emph{Challenges in Computational Statistics and Data Mining}, S.~Matwin and
  J.~Mielniczuk, Eds.\hskip 1em plus 0.5em minus 0.4em\relax Springer, 2016,
  pp. 53--70.

\bibitem{Debowski16b}
------, ``Consistency of the plug-in estimator of the entropy rate for ergodic
  processes,'' in \emph{2016 IEEE International Symposium on Information Theory
  (ISIT)}, 2016, pp. 1651--1655.

\bibitem{JansonLonardiSzpankowski04}
S.~Janson, S.~Lonardi, and W.~Szpankowski, ``On average sequence complexity,''
  \emph{Theor.\ Comput.\ Sci.}, vol. 326, pp. 213--227, 2004.

\bibitem{Ferenczi99}
S.~Ferenczi, ``Complexity of sequences and dynamical systems,'' \emph{Discr.\
  Math.}, vol. 206, pp. 145--154, 1999.

\bibitem{GheorghiciucWard07}
I.~Gheorghiciuc and M.~D. Ward, ``On correlation polynomials and subword
  complexity,'' \emph{Discr.\ Math.\ Theo.\ Comp.\ Sci.}, vol.~AH, pp. 1--18,
  2007.

\bibitem{Ivanko08}
E.~E. Ivanko, ``Exact approximation of average subword complexity of finite
  random words over finite alphabet,'' \emph{Trud.\ Inst.\ Mat.\ Meh.\ UrO
  RAN}, vol.~14, no.~4, pp. 185--189, 2008.

\end{thebibliography}

\end{document}